\documentclass[a4paper,9pt]{article}
%\usepackage{fontspec}
%\setmainfont{cmunrm.otf}
%\usepackage[utf8]{inputenc}
\usepackage[russian,english]{babel}
\usepackage{longtable}
\usepackage{amsmath}
\usepackage{amssymb}
\usepackage{amsthm}
\usepackage{listings}
\usepackage{graphicx}
\usepackage[pdftitle={Constant-time connectivity tests},pdfauthor={Dr.\ Philipp Klaus Krause}]{hyperref}

\delimitershortfall=0pt
\delimiterfactor=1010
\newcommand{\coloneqq}{\mathrel{\mathop:}=}

\newtheorem{lemma}{Lemma}
\newtheorem{theorem}{Theorem}

\author{Philipp Klaus Krause}
\title{Constant-time connectivity tests}

\begin{document}

\maketitle

\begin{abstract}
We present implementations of constant-time algorithms for connectivity tests and related problems. Some are implementations of slightly improved variants of previously known algorithms; for other problems we present new algorithms with substantially better runtime than previous algorithms: estimates of the distance to and tolerant testers for connectivity, $2$-edge-connectivity, $3$-edge-connectivity, eulerianity.
\end{abstract}

\section{Introduction}

Property testing is concerned with extremely fast (constant-time or other sublinear) algorithms for approximate decision-making.
While the runtime of constant-time algorithms does not depend on the size of the input, the runtime depending on parameters, such as the average degree of the input graph, or on the maximum allowed error of the output are still the subject of research.

%zshg implements constant-time graph algorithms for testing connectivity, $2$-edge connectivity and $3$-edge connectivity for sparse graphs.
%It also implements an estimate of the number of connected components, estimates of the distance to connectivity, 2-edge-connectivity and eulerianity, tolerant testers for connectivity, 2-edge-connectivity and eulerianity.
%This document also presents algorithms for estimating the distance to $3$-edge-connectivity and a tolerant tester for $3$-edge-connectivity that have not been implemented yet.

We implemented constant-time graph algorithms for testing connectivity, $2$-edge connectivity and $3$-edge connectivity for sparse graphs.
We implemented an estimate of the number of connected components, estimates of the distance to connectivity, $2$-edge-connectivity and eulerianity, tolerant testers for connectivity, $2$-edge-connectivity and eulerianity.
We also present an algorithm for estimating the distance to $3$-edge-connectivity and a tolerant tester for $3$-edge-connectivity that have not been implemented yet.
For connectivity and eulerianity our approach has better worst-case runtime than previous approaches (and the same expected runtime). For $2$-edge-connectivity and $3$-edge-connectivity, our approaches have better runtime than the previously known algorithm for general $k$-edge-connectivity. These advantages carry over to the tolerant testers.

Full, compileable C source of our implementation can be found as free software~\cite{Stallman2002} at \url{http://zshg.sourceforge.net}.

\section{Preliminaries}

Let $G = (V, E)$ be a graph. For a set of nodes $U \subseteq V$, the degree $d(U)$ is the number of edges with exactly one endpoint in $U$. For a single node $v \in V$ we also define the degree $d(v) \coloneqq d(\{v\})$.
%The number $\delta(G) \coloneqq \min\{d(v)\ |\ v \in V\}$ is the \emph{minimum degree} of $G$, $\Delta(G) \coloneqq \max\{d(v)\ |\ v \in V\}$ is the \emph{maximum degree} of $G$.
The number $\Delta(G) \coloneqq \max\{d(v)\ |\ v \in V\}$ is the \emph{maximum degree} of $G$.
The number
\begin{displaymath}
d(G) \coloneqq \frac{1}{|V|}\sum_{v \in V}d(v) = 2\frac{|E|}{|V|}
\end{displaymath}
is the \emph{average degree} of $G$.

In the \emph{bounded-degree model}, a graph $G = (V, E), \Delta \geq \Delta(G)$ is \emph{$\epsilon$-far} from having a property $\mathcal{P}$, if it cannot be transformed into a graph $G' \in \mathcal{P}, \Delta(G') \leq \Delta$ by at most $\epsilon \Delta |V|$ edge modifications. Complexities are given as functions of the maximum degree $\Delta$.

In the \emph{sparse graph model} (also called \emph{unbounded-degree model}), a graph $G = (V, E)$ is \emph{$\epsilon$-far} from having a property $\mathcal{P}$, if it cannot be transformed into a graph $G' \in \mathcal{P}$ by at most $\epsilon |E|$ edge modifications (i.e. edge insertions and edge deletions). Complexities are usually given as functions of the average degree $d$. When estimating distance to a property, often the distance is given in terms of edge modifications relative to the number of nodes ($\delta$): $\delta |V| = \epsilon |E|$.

The \emph{incidence-lists model} consists of the bounded-degree model and the unbounded-degree model. It corresponds to an implementation of the graph data structure using adjacency lists or incidence lists.

In the incidence lists model the available queries are \emph{degree} and \emph{$i$-th neighbour}.

Other models include the \emph{dense graph model} and the \emph{combined model} (also called the \emph{general graph model}). The dense graph model corresponds to an implementation of the graph data structure using an adjacency matrix. The available query is \emph{adjacency}. The combined model allows all queries from the incidence lists and the dense graph models.

All models allow uniform random sampling of nodes.

A graph is \emph{connected} if it has at least one node and for any two nodes $u,v$ in the graph, there is a path from $u$ to $v$. A \emph{connected component} of a graph is a maximal connected subgraph. A graph is called $k$\emph{-edge-connected}, if any graph obtained from it by removing less than $k$ edges is connected. An edge in a graph is a \emph{bridge}, if all paths connecting its endpoints contain the edge.
A $k$-class in a graph $g$ is a $k$-edge connected subset of nodes of a graph  $G$ (note that for $k > 2$ a $k$-class in $G$ is not necessarily a $k$-edge-connected subgraph of $G$).
A $k$-set in a graph is a $k$-edge-connected subgraph that is connected to the rest of the graph by less than $k$ edges. Every node in a graph is contained in at least one $k$-set. The $1$-sets are the connected components. A minimal $k$-set is called a $k$-class-leaf (the terminology is inspired by the use of the word ``leaf'' for $2$-class-leaves~\cite{Brahana1917} and by the creation of a auxiliary graph, that has one node per $k$-class in the original graph. For $k = 2$, this auxiliary graph is the bridge tree. The $k$-class-leaves correspond to nodes of degree $1$ in the auxiliary graph.

A set of nodes $U \subseteq V$ is \emph{$\ell$-extreme}, if $d(U) = \ell$ and each subset of $U$ has degree larger than $\ell$. Every $\ell$-extreme set is an $\ell'$-class for some $\ell'$ and any two extreme sets in a graph are either disjoint or one is a subset of the other~\cite{Naor1997}. For $U, W \subseteq V$ we use the notation $U \sqsubset W$ to denote that $U \subseteq W$ and $U$ is extreme and there is no extreme set $U'$, such that $U \subseteq U' \subseteq W$.

A graph is \emph{eulerian} if it contains a cycle that traverses each edge of the graph exactly once.

\section{Related Work}\label{Related}

\begin{figure}[!h]
\begin{longtable}{r|r|r}
	Property & Model & Complexity \\
	\hline
	Connectivity~\cite{Goldreich2002} & $\Delta$ & $O\left(\frac{\log^2\left(\frac{1}{\epsilon \Delta}\right)}{\epsilon}\right)$ \\
	\hline
	Connectivity~\cite{Ron2010} & $d$ & $O\left(\frac{\log\left(\frac{1}{\epsilon d}\right)}{\epsilon^2d}\right)$ \\
	\hline
	Connectivity~\cite{Parnas2002} & $d$ & $O\left(\frac{\log\left(\frac{1}{\epsilon d}\right)}{\epsilon^2d^2}\right)$ \\
	\hline
	Connectivity (here) & $d$ & $O\left(-\log(1 - p) \cdot \frac{\log\left(\frac{1}{\epsilon d}\right)}{\epsilon^2d^2}\right)$ \\
	\hline
	$2$-edge-connectivity~\cite{Goldreich2002} & $\Delta$ & $O\left(\frac{\log^2\left(\frac{1}{\epsilon \Delta}\right)}{\epsilon}\right)$ \\
	\hline
	$2$-edge-connectivity (here) & $d$ & $O\left(-\log(1 - p) \cdot \frac{\log\left(\frac{1}{\epsilon d}\right)}{\epsilon^2d^2}\right)$ \\
	\hline
	$3$-edge-connectivity~\cite{Goldreich2002} & $\Delta$ & $O\left(\frac{\log\left(\frac{1}{\epsilon \Delta}\right)}{\epsilon^2}\right)$ \\
	\hline
	$3$-edge-connectivity (here) & $d$ & $O\left(-\log(1 - p) \cdot \frac{\log\left(\frac{1}{\epsilon d}\right)}{\epsilon^3d^3}\right)$ \\
	\hline
	$k$-edge-connectivity~\cite{Goldreich2002} & $\Delta$ & $O\left(\frac{k^3 \log\left(\frac{1}{\epsilon \Delta}\right)}{\epsilon^{3 - \frac{2}{k}}\Delta^{2 - \frac{2}{k}}}\right)$ \\
	\hline
	$k$-edge-connectivity~\cite{Parnas2002} & $d$ & $\tilde{O}\left(\frac{k^4}{\epsilon^4d^4}\right)$ \\
	\hline
	Eulerianity~\cite{Goldreich2002} & $\Delta$ & $O\left(\frac{\log^2\left(\frac{1}{\epsilon \Delta}\right)}{\epsilon}\right)$ \\
	\hline
	Eulerianity~\cite{Parnas2002} & $d$ & $O\left(\frac{\log\left(\frac{1}{\epsilon d}\right)}{\epsilon^2d^2}\right)$ \\
	\hline
	Eulerianity (here) & $d$ & $O\left(-\log(1 - p) \cdot \frac{\log\left(\frac{1}{\epsilon d}\right)}{\epsilon^2d^2}\right)$ \\
\end{longtable}
\caption{\label{Connectivitytestcomparison}Comparison of connectivity tests, $\Delta$: bounded-degree model, $d$: sparse graph model.}
\end{figure}

\begin{figure}[!h]
\begin{longtable}{r|r|r}
	Property & Model & Complexity \\
	\hline
	Connectivity~\cite{Chazelle2005} & $d$ & $O\left(\frac{d \log\left(\frac{d}{\delta}\right)}{\delta^2}\right)$ \\
	\hline
	Connectivity~\cite{Marko2005} & $d$ & $O\left(\frac{1}{\epsilon^4d^4}\right) = O\left(\frac{1}{\delta^4}\right)$ \\
	\hline
	Connectivity~\cite{Berenbrink2014} & $d$ & $O\left(\frac{1}{\delta^4}\right)$, exp.: $O\left(\frac{\log\left(\frac{1}{\delta}\right)}{\delta^2}\right)$ \\
	\hline
	Connectivity (here) & $d$ & $O\left(\frac{\log\left(\frac{1}{\delta}\right)}{(1 - p)\delta^2}\right)$ \\
	\hline
	$2$-edge-connectivity (here) & $d$ & $O\left(\frac{\log\left(\frac{1}{\delta}\right)}{(1 - p)\delta^2}\right)$ \\
	\hline
	$3$-edge-connectivity (here) & $d$ & $O\left(\frac{\log\left(\frac{1}{\delta}\right)}{(1 - p)\delta^3}\right)$ \\
	\hline
	$k$-edge-connectivity~\cite{Marko2005} & $d$ & $O\left(\frac{k^6\log\left(\frac{k}{\epsilon d}\right)}{\epsilon^6d^6}\right) = O\left(\frac{k^6\log\left(\frac{k}{\delta}\right)}{\delta^6}\right)$ \\
	\hline
	Eulerianity~\cite{Marko2005} & $d$ & $O\left(\frac{1}{\epsilon^4d^4}\right) = O\left(\frac{1}{\delta^4}\right)$ \\
	\hline
	Eulerianity (here) & $d$ & $O\left(\frac{\log\left(\frac{1}{\delta}\right)}{(1 - p)\delta^2}\right)$ \\
\end{longtable}
\caption{\label{Connectivityestimatecomparison}Comparison of estimates of distance to connectivity, $d$: sparse graph model.}
\end{figure}

For connectivity, there is a relatively simple algorithm with small constants~\cite{Goldreich2002} in the bounded-degree model, that also works well in the sparse graph model~\cite{Ron2010}, where it has slightly higher complexity. With a more careful analysis, one can achieve a slightly better complexity bound~\cite{Parnas2002}.
Based on the existing approaches, we came up with a connectivity test that uses fewer queries (but has the same asymptotic complexity).

The algorithm for connectivity testing can be generalized to $k$-edge-connectivity testing~\cite{Goldreich2002}. These generalizations are more involved than the algorithm for connectivity, but still implementable, which also holds for their sparse graph versions~\cite{Parnas2002}.
A graph is eulerian when it is connected and has no nodes of odd degree~\cite{Euler1741}. Algorithms for connectivity often can be modified to get algorithms testing for eulerianity.

Tolerant testers can be created from algorithms that estimate the distance to having the property (as we do in Section~\ref{Tolerant}). There are various approaches to estimating the distance to connectivity. A recent one~\cite{Berenbrink2014} has an expected runtime of $O\left(\frac{\log\left(\frac{1}{\delta}\right)}{\delta^2}\right)$, though the worst-case complexity is substantially higher at $O\left(\frac{1}{\delta^4}\right)$. An earlier approach~\cite{Marko2005} can estimate the distance to $k$-edge-connectivity for any $k$. For $k = 2$ and $k = 3$ we can estimate the distance with lower asymptotic complexity than the general approach.

There was an attempt to construct tolerant testers for $k$-edge-connectivity via local reconstructors~\cite{Campagna2013}. The claimed complexity was substantially higher than previous approaches and exponential in $k$; there are multiple flaws in the proof, that cannot be fixed easily.

\section{Connectivity Test}\label{Connectivitytest}

%zshg contains an implementation of a connectivity (and $2$- and $3$-edge connectivity) test for the sparse graph model.
We implemented a connectivity (and $2$- and $3$-edge connectivity) test for the sparse graph model.
It is similar to existing connectivity test algorithms~\cite{Goldreich2002,Ron2010,Parnas2002}, which also carries over to the proofs. Our approach (Figure~\ref{Connecitvitytestalgorithm}) has a few minor changes to reduce the query complexity by a constant factor. Also, instead of having a fixed probability of $\frac{2}{3}$ for rejecting graphs that are $\epsilon$-far from being connected, we allow a parameter $p$; compared to just running the algorithm repeatedly, this lowers runtime and number of queries when the acceptable probability $1 - p$ of an unconnected graph that is $\epsilon$-far from being connected being considered connected is not a power of $\frac{1}{3}$. The implementations of the algorithms rand\_index(), which uniformly selects a random node, and k\_set($s$, $2^i$), which return the number of nodes in the minimal $k$-set containing $s$ connected to the rest of the graph by less than $k$ edges are not shown. 
1\_set is a connectivity test (which can be implemented as a depth-first search). 2\_set() and 3\_set can be implemented using variants of algorithms for identifying 2-class-leaves and 3-class leaves~\cite{Goldreich2002}.

\begin{figure}[!h]
\lstset{mathescape=true}
\begin{lstlisting}
bool zshg_k($\epsilon$, $p$, $d$, $n$)
{
  $q \coloneqq -\log_e(1 - p)$;
	
  if($n$ == $0$) // Empty graph: unconnected
    return(false);
  if($d < 1$) // Low degree: connected if single node only
    return($n$ == 1);

  $\ell \coloneqq \log_2\left(\frac{8\lceil\frac{k}{2}\rceil}{\epsilon d}\right)$;

  if($n \leq \frac{8q\lceil\frac{k}{2}\rceil\ell}{\epsilon d}$) // Small graph: do exact check
    return(k_set(0, $n$) ==  $n$);

  for($i \coloneqq 1$; $i \leq \ell$; $i \coloneqq i + 1$)
  {
    $m_i \coloneqq \frac{8q\lceil\frac{k}{2}\rceil\ell}{2^i\epsilon d}$;

    for($c \coloneqq 0$; $c \leq m_i$; $c \coloneqq c + 1$)
    {
      $s$ $\coloneqq$ rand_index(n);
			
      if(k_set($s$, $2^i$) < $2^i$)
        return(false);
    }
  }

  return(true);
}
\end{lstlisting}
\caption{\label{Connecitvitytestalgorithm}$k$-edge-connectivity test algorithm in C-like pseudocode (the implementation resides in zshg\_impl() in zshg.c)}
\end{figure}

\begin{lemma}\label{lem:connectivitycorrectness}
The algorithm in Figure~\ref{Connecitvitytestalgorithm} accepts all $k$-edge-connected graphs and rejects graphs that are $\epsilon$-far from being $k$-edge-connected with probability at least $p$.
\end{lemma}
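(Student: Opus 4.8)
The plan is to argue the two directions separately. For the completeness direction (all $k$-edge-connected graphs are accepted), observe that the algorithm returns \texttt{false} only in two places: in the branch $n \leq \frac{8q\lceil k/2\rceil\ell}{\epsilon d}$ when \texttt{k\_set(0,$n$)} $\neq n$, and inside the double loop when \texttt{k\_set($s$,$2^i$)} $< 2^i$. If $G$ is $k$-edge-connected, then the only $k$-set is $V$ itself (removing fewer than $k$ edges never disconnects $G$, so no proper subset of nodes is joined to the rest by fewer than $k$ edges). Hence \texttt{k\_set($s$,$2^i$)} always returns $n \geq 2^i$ for every $i \leq \ell$ — here I would check the edge cases $d<1$ and $n=0,1$ directly against the definition of connectivity — and likewise \texttt{k\_set(0,$n$)} returns $n$. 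So the algorithm never rejects, and returns \texttt{true}.

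For the soundness direction, suppose $G$ is $\epsilon$-far from being $k$-edge-connected. First I would dispose of the small-$n$ case: if $n \leq \frac{8q\lceil k/2\rceil\ell}{\epsilon d}$ the algorithm tests \texttt{k\_set(0,$n$)} exactly, which equals $n$ iff $G$ is $k$-edge-connected, so it correctly rejects with probability $1$. For the main case, the standard fact (as in \cite{Goldreich2002,Parnas2002,Ron2010}) is that a graph $\epsilon$-far from $k$-edge-connectivity must have many $k$-class-leaves: if there were few, one could make it $k$-edge-connected cheaply by adding $O(k)$ edges per $k$-class-leaf, contradicting $\epsilon$-farness. Quantitatively, the number of $k$-sets of size at most $2^i$ that are "small" is at least on the order of $\frac{\epsilon d |V|}{2^i \lceil k/2\rceil}$ for the relevant scales $2^i \leq \frac{8\lceil k/2\rceil}{\epsilon d}$ (this is where the definition of $\ell$ comes from — scales larger than that contribute nothing). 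A uniformly random node lands in a small $k$-set of size $\le 2^i$ with probability at least roughly $\frac{\epsilon d}{2^i}\cdot\frac{1}{\text{const}\cdot\lceil k/2\rceil}$; when it does, \texttt{k\_set($s$,$2^i$)} returns a value $< 2^i$ and the algorithm rejects.

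The quantitative core is then a union-bound / independence computation over the $\ell$ scales. At scale $i$ we draw $m_i = \frac{8q\lceil k/2\rceil\ell}{2^i \epsilon d}$ samples; the chance that none of them hits a small $k$-set of size $\le 2^i$ is at most $\bigl(1 - \frac{\epsilon d}{c\, 2^i \lceil k/2\rceil}\bigr)^{m_i} \le \exp\bigl(-\frac{m_i \epsilon d}{c\, 2^i \lceil k/2\rceil}\bigr)$, and the constants in $m_i$ and $\ell$ are chosen precisely so that this is at most $e^{-q/\ell}$. Since there are $\ell$ scales, the probability the algorithm fails to reject at any scale is at most $\ell \cdot e^{-q/\ell}$ — wait, that is not quite right; one wants the product structure instead. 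More carefully: because every $k$-class-leaf has size $\le 2^i$ for some single scale $i^\star$, it suffices that the algorithm reject \emph{at that one scale}, so the failure probability is at most $\exp(-q/\ell)^{\ell}\cdots$ — the intended bound is that at the scale capturing the bulk of the small $k$-sets, failure probability is at most $e^{-q} = 1-p$, giving rejection probability $\ge p$. I expect the main obstacle to be exactly this bookkeeping: pinning down the constant $c$ in the "fraction of nodes in small $k$-sets" estimate (which requires the structural lemma about $k$-class-leaves and the fact that the $k$-sets of a given size partition a large fraction of $V$), and then verifying that the specific constants $8$, $\lceil k/2\rceil$, and the definition of $\ell$ make the per-scale failure probability telescope to $1-p$ rather than merely to $\ell(1-p)$. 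The role of $q = -\log_e(1-p)$ is to convert the target confidence $p$ into the exponent, replacing the usual "$2/3$" by an arbitrary $p$, and I would verify that substituting $p = 2/3$ recovers (up to the claimed constant-factor improvement) the bounds of \cite{Parnas2002}.
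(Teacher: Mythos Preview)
Your outline matches the paper's approach, but the quantitative core is off, and your own back-and-forth about union bounds versus products signals that you have not located the actual mechanism. The paper does not combine the scales at all. It argues: (i) an $\epsilon$-far graph has more than $\frac{\epsilon m}{\lceil k/2\rceil}$ $k$-class-leaves (else $\lceil k/2\rceil$ cycles through them would repair it); (ii) at least half of those leaves have fewer than $\frac{4\lceil k/2\rceil}{\epsilon d}$ nodes; (iii) bucket these small leaves into the $\ell$ dyadic size classes $B_i = \{\text{size in }[2^{i-1},2^i)\}$ and apply pigeonhole to obtain a \emph{single} index $i$ with $|B_i| \ge \frac{\epsilon m}{2\lceil k/2\rceil\,\ell}$; (iv) the nodes lying in leaves of $B_i$ number at least $2^{i-1}|B_i|$, so a uniform sample hits one with probability at least $\frac{2^{i-3}\epsilon d}{\lceil k/2\rceil\,\ell}$; (v) at that one scale $i$, all $m_i$ samples miss with probability at most $\bigl(1 - \tfrac{2^{i-3}\epsilon d}{\lceil k/2\rceil\,\ell}\bigr)^{m_i} \le e^{-q} = 1-p$. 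The other scales play no role in the bound; the algorithm runs them all, but rejection at the pigeonholed scale alone already succeeds with probability $\ge p$.

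Your hit-probability estimate has $2^i$ in the denominator, which is backwards: leaves in bucket $B_i$ each contribute at least $2^{i-1}$ nodes, so the factor $2^{i-1}$ \emph{multiplies} the hit probability. You are also missing the $1/\ell$ coming from pigeonhole. These two slips are exactly what made your per-scale exponent come out as something like $q/\ell$ and then tempted you to union-bound or take a product over $\ell$ scales. With $2^i$ on the correct side and the $1/\ell$ in place, the product $m_i \cdot (\text{hit probability})$ is exactly $q$, independent of $i$, and no aggregation over scales is needed. (Separately, be careful to speak of $k$-class-leaves rather than arbitrary $k$-sets here; the counting argument in step (i) is about minimal $k$-sets.)
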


\begin{proof}
Clearly, the algorithm never rejects a $k$-edge-connected graph. If a graph is $\epsilon$-far from being $k$-edge-connected, it has more than $\frac{\epsilon m}{\lceil\frac{k}{2}\rceil}$ $k$-class-leaves (otherwise one could add $\epsilon mk$ edges to connect all $k$-class-leaves by adding $\lceil\frac{k}{2}\rceil$ cycles through them, resulting in a $k$-edge-connected graph). Then the graph has at least $\frac{\epsilon m}{2\lceil\frac{k}{2}\rceil}$ $k$-class-leaves consisting of less than $\frac{4\lceil\frac{k}{2}\rceil}{\epsilon d}$ nodes each. Let $B_i$ be the set of all such $k$-class-leaves of size at least $2^{i - 1}$, but less than $2^i$. Let $\ell \coloneqq \log_2(\frac{8\lceil\frac{k}{2}\rceil}{\epsilon d})$. Since $\sum_{i = 1}^{\ell} |B_i| \geq \frac{\epsilon m}{2\lceil\frac{k}{2}\rceil}$, there exists an $i, 1 \leq i \leq \ell$, such that $|B_i| \geq \frac{\epsilon m}{2\lceil\frac{k}{2}\rceil\ell}$. The number of nodes residing in $k$-class-leaves of size at least $2^{i - 1}$, but less than $2^i$ is at least $2^{i - 1}|B_i| \geq \frac{2^{i - 2}\epsilon m}{\lceil\frac{k}{2}\rceil\ell}$. The probability of selecting such a node by uniform sampling is $\frac{2^{i - 1}|B_i|}{n} \geq \frac{2^{i - 2}\epsilon m}{n \lceil\frac{k}{2}\rceil\ell} = \frac{2^{i - 3}\epsilon d}{\lceil\frac{k}{2}\rceil\ell}$. The probability of not selecting any such node in iteration $i$ of the algorithm bounds the probability $p_{\textrm{fail}}$ of not rejecting an unconnected graph that is $\epsilon$-far from being connected. Using $1 + x \leq e^x$, which follows from Bernoulli's inequality~\cite{Slusius1668,Bernoulli1689}, we get

\begin{align*}
p_{\textrm{fail}} \leq \left(1 - \frac{2^{i - 3}\epsilon d}{\left\lceil\frac{k}{2}\right\rceil\ell} \right)^{m_i} \leq e^{-\left(\frac{2^{i - 3}\epsilon d}{\left\lceil\frac{k}{2}\right\rceil\ell} m_i \right)} = e^{-\left(\frac{2^{i - 3}\epsilon d}{k\ell} \frac{8q\left\lceil\frac{k}{2}\right\rceil\ell}{2^i \epsilon d} \right)} = \\
= e^{-q} = e^{\log_e(1 - p)} = 1 - p.
\end{align*}
\end{proof}

\begin{lemma}\label{lem:connectivitycomplexity}
For $k = 1, 2$ the query complexity of the algorithm in Figure~\ref{Connecitvitytestalgorithm} is $O\left(-\frac{\log(1 - p)\log\left(\frac{1}{\epsilon d}\right)}{\epsilon^2 d^2}\right)$. For $k = 3$ it is $O\left(-\frac{\log(1 - p)\log\left(\frac{1}{\epsilon d}\right)}{\epsilon^3 d^3}\right)$.
\end{lemma}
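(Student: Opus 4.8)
The plan is to write the total query count as a sum over the $\ell$ iterations of the outer loop (plus the cost of the early-return branch), where iteration $i$ contributes $m_i+1$ calls to \texttt{k\_set} with budget $2^i$. So the whole argument reduces to (a) a bound $C_k(2^i)$ on the cost of one call \texttt{k\_set}$(s,2^i)$, and (b) a geometric-series estimate, using the crucial identity $2^\ell=\frac{8\lceil k/2\rceil}{\epsilon d}=\Theta\!\left(\frac1{\epsilon d}\right)$.

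First I would bound $C_k(2^i)$. For $k=1$, \texttt{k\_set}$(s,2^i)$ is a breadth-first search from $s$ that stops as soon as it has discovered $2^i$ distinct nodes, or has exhausted the component of $s$. In the first case every edge it inspects has both endpoints among the at most $2^i$ discovered nodes, so it makes $O\!\left(\binom{2^i}{2}\right)=O(2^{2i})$ edge queries and $O(2^i)$ degree queries; in the second case it inspects every edge of a component on fewer than $2^i$ nodes, again $O(2^{2i})$ queries. Hence $C_1(2^i)=O(2^{2i})$. The 2-class-leaf search used for $k=2$ likewise remains inside the (at most $2^i$)-node subgraph it has explored, so $C_2(2^i)=O(2^{2i})$. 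For $k=3$ the 3-class-leaf search of~\cite{Goldreich2002} is more delicate, since it must exclude separations by one \emph{or two} edges; reading off its cost when the minimal $3$-set of $s$ has fewer than $2^i$ nodes, one gets an extra factor proportional to the number of explored nodes, i.e.\ $C_3(2^i)=O(2^{3i})$.

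Then I would assemble the sum. Since $m_i+1=O\!\left(\frac{\lceil k/2\rceil q\ell}{2^i\epsilon d}\right)$ (with $q=-\log_e(1-p)$), iteration $i$ costs $O\!\left(\frac{q\ell}{2^i\epsilon d}\,C_k(2^i)\right)$ queries for fixed $k$. For $k=1,2$ this is $O\!\left(\frac{q\ell}{\epsilon d}\,2^i\right)$, and $\sum_{i=1}^{\ell}2^i=\Theta(2^\ell)=\Theta\!\left(\frac1{\epsilon d}\right)$, giving total cost $O\!\left(\frac{q\ell}{\epsilon^2 d^2}\right)$. For $k=3$ each iteration costs $O\!\left(\frac{q\ell}{\epsilon d}\,2^{2i}\right)$, and $\sum_{i=1}^{\ell}2^{2i}=\Theta(2^{2\ell})=\Theta\!\left(\frac1{\epsilon^2 d^2}\right)$, so the total is $O\!\left(\frac{q\ell}{\epsilon^3 d^3}\right)$. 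Substituting $q=\Theta(-\log(1-p))$ and $\ell=\Theta\!\left(\log\frac1{\epsilon d}\right)$ yields the two claimed bounds. The early-return branch makes a single call \texttt{k\_set}$(0,n)$ with $n\le\frac{8q\lceil k/2\rceil\ell}{\epsilon d}$; here the search stays inside the component of node $0$, which has at most $n$ nodes and $m=nd/2$ edges, so its cost is $O(n+m)=O(nd)$ (using a linear-time $3$-edge-connectivity routine for $k=3$), which a routine check shows is dominated by the main term in the relevant range of parameters.

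The step I expect to be the real obstacle is pinning down $C_3(2^i)$: one has to go back into the $3$-class-leaf detection procedure of~\cite{Goldreich2002} and verify that, truncated to the case where the sampled node's minimal $3$-set has fewer than $2^i$ nodes, the whole procedure touches only $O(2^{3i})$ edges and queries and not something larger — everything else is the geometric bookkeeping above plus the (mild) check that the early-return cost is absorbed.
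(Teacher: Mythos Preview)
Your proposal is correct and follows essentially the same route as the paper: bound a single call \texttt{k\_set}$(s,2^i)$ by $O(2^{2i})$ for $k=1,2$ and $O(2^{3i})$ for $k=3$, then sum the geometric series $\sum_i m_i\cdot C_k(2^i)$ and substitute $2^\ell=\Theta(1/(\epsilon d))$, $q=-\log(1-p)$. The only cosmetic differences are that the paper phrases the $k=3$ cost as ``an additional factor of about $16/(\epsilon d)$'' (i.e.\ it replaces $2^i$ by its maximum $2^\ell$) rather than your sharper $C_3(2^i)=O(2^{3i})$, and it does not discuss the early-return branch at all; your worry about $C_3$ is exactly what the paper resolves by describing the $3$-class-leaf routine as one DFS over $\le 2^i$ nodes followed by a \texttt{2\_set} call per discovered node.
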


\begin{proof}
In each iteration of the algorithm, $m_i$ nodes are selected. For $k = 1$, for each selected node, in k\_set() a depth-first-search for up to $2^i$ nodes is done (dfs() in zshg.c), resulting in less than $2^i \cdot 2^i$ queries for each search. The query complexity is thus less than
\begin{align*}
\sum_{i = 1}^{\ell}m_i \cdot 2^i \cdot 2^i = \sum_{i = 1}^{\ell}\frac{8q\ell}{2^i\epsilon d} \cdot 2^i \cdot 2^i \leq \frac{8q\ell}{\epsilon d}2^{\ell + 1} = \frac{8q\log_2\left(\frac{8}{\epsilon d}\right)}{\epsilon d} 2 \frac{8}{\epsilon d} = \\
= 128 \frac{q \log_2\left(\frac{8}{\epsilon d}\right)}{\epsilon^2 d^2} = 128 \frac{-\log_e\left(1 - p\right) \cdot \log_2\left(\frac{8}{\epsilon d}\right)}{\epsilon^2 d^2} \in O\left(-\log\left(1 - p\right) \cdot \frac{\log\left(\frac{1}{\epsilon d}\right)}{\epsilon^2 d^2}\right).
\end{align*}

For $k = 2$, zshg uses a variant of an algorithm for 2-class-leaves~\cite{Goldreich2002} (comp2() in zshg2.c). This algorithm essentially does two depth-first-searches for up to $2^i$ nodes, with the second search somewhat restricted, for a total of at most $(2^i)^2$ queries. This results in a call to 2\_set needing up to $4$ times as many queries as 1\_set, and thus a total factor of about $4$ in the number of queries done by zshg\_2 vs. zshg\_1. For $k = 3$, we use a variant of an algorithm for 3-class-leaves~\cite{Goldreich2002}. This algorithm essentially does one depth-first-search for up to $2^i$ nodes followed by an invocation of 2\_set() on each node discovered. We get an additional factor of about $\frac{16}{\epsilon d}$ in the number of queries by zshg\_3 vs. zshg\_1.
\end{proof}

From our connectivity test, we obtain a test for eulerianity based on the following:

\begin{lemma}\label{lem:eulerianityconnectivity}
Let $G = (V, E)$ be a graph that is $\epsilon$-far from eulerianity. Then, it is $\frac{\epsilon}{2}$-far from connectivity or it has more than $\epsilon dn$ nodes of odd degree.
\end{lemma}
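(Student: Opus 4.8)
The plan is to work through the classical characterization recalled in Section~\ref{Related}: a graph is eulerian precisely when it is connected and has no vertex of odd degree. Hence a cheap way to eulerize $G$ is to first repair connectivity and then repair parities, the first part costing (essentially) the distance of $G$ to connectivity and the second costing about half the number of odd-degree vertices --- odd-degree vertices come in pairs and a single edge modification can fix two of them. I would prove the contrapositive: assuming $G$ is not $\frac{\epsilon}{2}$-far from connectivity and has at most $\epsilon dn$ vertices of odd degree, produce a sequence of edge modifications turning $G$ into an eulerian graph that is too short for $G$ to be $\epsilon$-far from eulerianity.

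Concretely, the steps are as follows. First, since $G$ is not $\frac{\epsilon}{2}$-far from connectivity, at most $\frac{\epsilon}{2}|E|$ edge insertions make it connected; I would perform these insertions \emph{carefully}, routing each new edge, whenever possible, through a vertex of odd degree in the component being attached, so the insertion \emph{cancels} odd-degree vertices rather than creating them. The only components that can force a fresh odd-degree vertex are those all of whose degrees are already even; these can be threaded into the linking structure as interior (degree-two) vertices of a path, where they change no parities, so only the two path ends and at most one leftover all-even component remain as sources of a bounded number of new odd-degree vertices. Second, the connected graph so obtained has an even number of odd-degree vertices (handshake lemma), still essentially at most $\epsilon dn$ of them; pair them up and, for each pair, add one edge joining the two endpoints (reading an already-present such edge as a deletion, or using a length-two detour, if one insists on simple graphs). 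This keeps the graph connected and makes every degree even, i.e.\ eulerian. Third, tally the modifications: the count is essentially (distance to connectivity) plus (half the number of odd-degree vertices), so comparing it against the budget $\epsilon|E|$ forces at least one of these two quantities to be large, i.e.\ $G$ is $\frac{\epsilon}{2}$-far from connectivity or has more than $\epsilon dn$ vertices of odd degree.

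The step I expect to be the main obstacle is the first one. Linking the $c$ connected components by $c-1$ arbitrary edges can create as many as $2(c-1)$ new odd-degree vertices, whose repair would then cost another $\approx c-1$ modifications and overshoot the budget; the routing discipline --- connect through odd-degree endpoints, and handle all-even components as degree-two vertices of a path --- is exactly what makes connectivity repair (nearly) parity-neutral, and I expect the delicate verification to be that this routing can always be carried out and that the unavoidable residual parity defects amount to only an additive lower-order term. The remaining ingredients are routine: Euler's theorem, the handshake lemma guaranteeing that the pairing in the second step always exists, and the fact that edge insertions never destroy connectivity.
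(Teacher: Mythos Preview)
Your contrapositive strategy is correct and would prove the lemma, but the paper's argument is considerably simpler because it reverses your two steps. The paper \emph{first} pairs up the odd-degree vertices by new edges (each edge repairs two parities, so at most $\tfrac{1}{2}\cdot\epsilon dn$ insertions suffice), and only \emph{then} restores connectivity by adding a single \emph{cycle} through one representative of every connected component of the original graph. A cycle visits each chosen representative exactly twice, hence adds $2$ to its degree and is automatically parity-neutral; so the two repair phases do not interact at all, and the ``main obstacle'' you correctly identify---new odd-degree vertices created while linking components---simply never arises. Your routing discipline (attach through existing odd-degree vertices; thread all-even components as interior vertices of a path) is essentially engineering the same parity-neutrality by hand, and the residual defects you anticipate at the two path ends would vanish once you close the path into a cycle. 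Thus both proofs rest on the same idea, but the paper's ordering reduces it to a one-line observation while yours requires the case analysis you sketch.
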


\begin{proof}
Assume that the graph is $\frac{\epsilon}{2}$-close to connectivity (i.e. it has at most $\frac{\epsilon}{2}dn$ connected components) and at most $\frac{\epsilon}{2}dn$ nodes of odd degree. We show that we can make the graph Eulerian by adding $\epsilon dn$ edges, a contradiction.
we first add $\frac{\epsilon}{2}dn$ edges connecting nodes of odd degree to obtain a graph in which all nodes have even degree. We then add $\frac{\epsilon}{2}dn$ edges, choosing one node out of each connected component in the original graph and adding a cycle through these. The resulting graph is connected and has only nodes of even degree and is thus eulerian~\cite{Euler1741,Listing1847,Hierholzer1873}.
\end{proof}

\begin{theorem}
There are testers for $1$- and $2$-edge-connectivity and for eulerianity with complexity $O\left(-\frac{\log(1 - p)\log\left(\frac{1}{\epsilon d}\right)}{\epsilon^2 d^2}\right)$, a tester for $3$-edge-connectivity with complexity $O\left(-\frac{\log(1 - p)\log\left(\frac{1}{\epsilon d}\right)}{\epsilon^3 d^3}\right)$.
\end{theorem}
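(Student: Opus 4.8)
The plan is to obtain the three edge-connectivity testers immediately from Lemmas~\ref{lem:connectivitycorrectness} and~\ref{lem:connectivitycomplexity}, and to build the eulerianity tester as a thin wrapper around the $1$-edge-connectivity tester using Lemma~\ref{lem:eulerianityconnectivity}.

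For $k \in \{1, 2, 3\}$ the claimed testers are simply the algorithm \texttt{zshg\_k} of Figure~\ref{Connecitvitytestalgorithm} run with the given parameters $\epsilon$, $p$, $d = d(G)$, $n = |V|$. Lemma~\ref{lem:connectivitycorrectness} shows it accepts every $k$-edge-connected graph and rejects every graph that is $\epsilon$-far from $k$-edge-connected with probability at least $p$, and Lemma~\ref{lem:connectivitycomplexity} bounds its query complexity by $O\!\left(-\frac{\log(1 - p)\log(1 / \epsilon d)}{\epsilon^2 d^2}\right)$ for $k = 1, 2$ and by $O\!\left(-\frac{\log(1 - p)\log(1 / \epsilon d)}{\epsilon^3 d^3}\right)$ for $k = 3$. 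Nothing further is needed for the edge-connectivity part of the statement.

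For eulerianity I would run two sub-tests and reject if either rejects. The first is the connectivity tester with the far-ness parameter halved, i.e. \texttt{zshg\_1}$(\epsilon / 2, p, d, n)$. The second draws $c \coloneqq \lceil -\log_e(1 - p) / (\epsilon d) \rceil$ nodes uniformly at random, queries the degree of each, and rejects as soon as it encounters a node of odd degree (the degenerate cases $n = 0$ and $d < 1$ are disposed of by direct inspection, as in Figure~\ref{Connecitvitytestalgorithm}). For correctness, recall that an eulerian graph is connected and has no node of odd degree~\cite{Euler1741,Listing1847,Hierholzer1873}, so neither sub-test ever rejects it. If $G$ is $\epsilon$-far from eulerianity, Lemma~\ref{lem:eulerianityconnectivity} splits into two cases. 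If $G$ is $\frac{\epsilon}{2}$-far from connectivity, the first sub-test rejects with probability at least $p$ by Lemma~\ref{lem:connectivitycorrectness}. Otherwise $G$ has more than $\epsilon d n$ nodes of odd degree, so a single uniform sample fails to hit one with probability less than $1 - \epsilon d$, and the second sub-test fails on all $c$ samples with probability at most $(1 - \epsilon d)^c \leq e^{-\epsilon d c} \leq e^{\log_e(1 - p)} = 1 - p$ (using $1 + x \leq e^x$, as in the proof of Lemma~\ref{lem:connectivitycorrectness}); hence it rejects with probability at least $p$. Either way $G$ is rejected with probability at least $p$.

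For the complexity, the first sub-test costs $O\!\left(-\frac{\log(1 - p)\log(1 / \epsilon d)}{\epsilon^2 d^2}\right)$ by Lemma~\ref{lem:connectivitycomplexity}, since replacing $\epsilon$ by $\epsilon / 2$ only affects constants, and the second sub-test costs $c \in O\!\left(-\frac{\log(1 - p)}{\epsilon d}\right)$ queries, which is absorbed into the first bound. There is no real obstacle here: the two points that require a little care are splitting the error probability between the sub-tests of the eulerianity tester — which works because in each case of Lemma~\ref{lem:eulerianityconnectivity} one designated sub-test already rejects with probability at least $p$, so both may use the same parameter $p$ — and verifying that the odd-degree sampling is not the dominant cost.
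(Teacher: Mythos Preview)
Your proposal is correct and follows essentially the same approach as the paper, which simply states that the theorem follows from Lemmata~\ref{lem:connectivitycorrectness}, \ref{lem:connectivitycomplexity}, and~\ref{lem:eulerianityconnectivity}. You have merely made explicit the construction of the eulerianity tester (connectivity test with halved parameter plus odd-degree sampling) and the observation that the disjunction in Lemma~\ref{lem:eulerianityconnectivity} lets each sub-test keep the full confidence parameter~$p$ --- details the paper leaves to the reader.
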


\begin{proof}
This follows from Lemmata \ref{lem:connectivitycorrectness}, \ref{lem:connectivitycomplexity}, \ref{lem:eulerianityconnectivity}.
\end{proof}

\section{Distance to Connectivity Estimate}

Our algorithm for estimating the number of connected components is based on an earlier algorithm~\cite{Berenbrink2014}, which had worst-case query complexity $O(\delta^{-4})$ and expected query complexity $O(\delta^{-2}\log(\delta^{-1}))$. We reduce the worst-case query complexity to $O(\delta^{-2}\log(\delta^{-1}))$ and allow the probability $p$ of the error being outside the error bound to be specified instead of using a fixed value of $\frac{3}{4}$.

rand\_range($i$) returns a random integer in the range $1, \ldots, i$. Each number $j$ is returned with probability $j^{-2} - (j + 1)^{-2}$ for $j < i$, while $i$ is returned with probability $i^{-2}$. The function 1\_set() can be implemented as above in Section~\ref{Connectivitytest}; the additional parameter \&$\ell$ is used to limit the total number of queries made: $\ell$ is decremented each time a query is made; when $\ell$ reaches $0$, no further queries are made.

\begin{figure}[!h]
\lstset{mathescape=true}
\begin{lstlisting}
float 1_sets($\delta$, $p$, $n$)
{
  $r = \frac{2}{(1 - p)\delta^2}$;
  $a = 0$;

  $\ell = r\left(\log_e\left(\frac{2}{\delta}\right) + \frac{9}{2}\right)$;

  for($i$ = 0; $i < r$; $i$++)
  {
    $s$ = rand_index($n$);
    $x$ = rand_range($\frac{2}{\delta}$);
    $b$ = 1_set($s$, $x + 1$, &$\ell$);
    if($b \leq x$)
      $a$ += $b$;
  }

  return($\frac{an}{r} + \frac{\delta n}{4}$);
}
\end{lstlisting}
\caption{\label{Componentcountingalgorithm}$1$-set counting algorithm in C-like pseudocode (implementation: zshg\_components() in zshg\_c.c)}
\end{figure}

\begin{lemma}\label{lem:componentcorrectness}
For $\delta \leq 1$, with probability at least $p$, the return value of the algorithm in Figure~\ref{Componentcountingalgorithm} is the number of connected components in the graph up to an error of at most $\delta n$.
\end{lemma}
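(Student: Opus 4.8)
The plan is to separate the two error contributions in the return value $\tfrac{an}{r} + \tfrac{\delta n}{4}$: a deterministic bias from connected components too large to be seen, and the random fluctuation of $a$. Write $C_{\mathrm{small}}$ for the number of connected components with at most $\lfloor 2/\delta\rfloor$ vertices and $C_{\mathrm{large}}$ for the number of the remaining components, so the true answer is $C_{\mathrm{small}} + C_{\mathrm{large}}$. The large components are pairwise disjoint and each has more than $2/\delta$ vertices, hence $0 \le C_{\mathrm{large}} < \tfrac{\delta n}{2}$ and therefore $\bigl|C_{\mathrm{large}} - \tfrac{\delta n}{4}\bigr| \le \tfrac{\delta n}{4}$; the summand $\tfrac{\delta n}{4}$ is exactly the midpoint of the interval into which $C_{\mathrm{large}}$ is forced. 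It thus suffices to show that $\bigl|\tfrac{an}{r} - C_{\mathrm{small}}\bigr| < \tfrac{3\delta n}{4}$ holds with probability at least $p$; the triangle inequality then finishes the proof.

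First I would analyse the idealized algorithm in which the query budget $\ell$ is infinite, so that each call to 1\_set returns $\min(n_C, x+1)$, where $C$ is the component containing the sampled vertex and $x+1$ is the node limit passed to it. Let $Y_i$ be the amount added to $a$ in iteration $i$; the $Y_i$ are independent and identically distributed, and I assume for simplicity that the loop runs exactly $r = \tfrac{2}{(1-p)\delta^2}$ times. From the definition of rand\_range one reads off $\Pr[x \ge j] = j^{-2}$ for $1 \le j \le \lfloor 2/\delta\rfloor$, so conditioned on the sampled vertex lying in a component $C$, the iteration adds $n_C$ when $x \ge n_C$ and $0$ otherwise; this event has probability $n_C^{-2}$ if $n_C \le 2/\delta$ and probability $0$ if $n_C > 2/\delta$. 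Summing over components gives $E[Y_1] = \sum_{C:\,n_C \le 2/\delta}\tfrac{n_C}{n}\cdot n_C^{-2}\cdot n_C = \tfrac{C_{\mathrm{small}}}{n}$ and $E[Y_1^2] = \sum_{C:\,n_C \le 2/\delta}\tfrac{n_C}{n} \le 1$, so $\operatorname{Var}(Y_1) \le 1$. Hence $E\bigl[\tfrac{an}{r}\bigr] = C_{\mathrm{small}}$ and $\operatorname{Var}\bigl(\tfrac{an}{r}\bigr) = \tfrac{n^2}{r}\operatorname{Var}(Y_1) \le \tfrac{n^2}{r}$, and Chebyshev's inequality yields
\begin{displaymath}
\Pr\!\left[\Bigl|\tfrac{an}{r} - C_{\mathrm{small}}\Bigr| \ge \tfrac{3\delta n}{4}\right] \;\le\; \frac{n^2/r}{\bigl(\tfrac{3\delta n}{4}\bigr)^2} \;=\; \frac{16}{9\, r\, \delta^2} \;=\; \frac{8(1-p)}{9}.
\end{displaymath}

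It remains to control the query budget $\ell = r\bigl(\log_e(2/\delta) + \tfrac92\bigr)$. On the event that $\ell$ is never driven to $0$, every 1\_set call returns normally and the real execution coincides with the idealized one; a union bound over ``the idealized estimate is off by $\tfrac{3\delta n}{4}$'' and ``the budget is exhausted'' then gives total failure probability at most $\tfrac{8(1-p)}{9} + \Pr[\text{budget exhausted}]$, and we are done once $\Pr[\text{budget exhausted}] \le \tfrac{1-p}{9}$. To bound the latter I would write the total number of queries as $\sum_{i=1}^{r} Z_i$ with the $Z_i$ independent and identically distributed, bound $Z_i$ by a function of $\min(n_{C_i}, x_i+1)$ using the query cost of the 1\_set procedure (at most the square of the number of vertices it explores), and evaluate $E[Z_1]$ and $E[Z_1^2]$ with the same identity $\Pr[x \ge j] = j^{-2}$; the relevant sums telescope to expressions of the type $\sum_j(\tfrac2j + \tfrac1{j^2})$ and $\sum_j 2j$. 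This should give $E[Z_1] \le \log_e(2/\delta) + c$ for some constant $c < \tfrac92$, hence $E\bigl[\sum_i Z_i\bigr] \le \ell - \Omega(r)$, together with $E[Z_1^2] = O(\delta^{-2})$, so that $\operatorname{Var}\bigl(\sum_i Z_i\bigr) = O(r\,\delta^{-2})$ and Chebyshev against the $\Omega(r)$ slack bounds $\Pr\bigl[\sum_i Z_i > \ell\bigr]$ by $O(1-p)$, which the constant $\tfrac92$ and the leading factor of $r$ are tuned to make at most $\tfrac{1-p}{9}$.

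The expectation and variance computations for the idealized estimator are routine once $\Pr[x \ge j] = j^{-2}$ is in hand, so the main obstacle is the budget step: one must pin down the number of queries a 1\_set call actually performs here and verify that its $j^{-2}$-weighted average genuinely beats $\log_e(2/\delta) + \tfrac92$ (the crude worst-case bound used elsewhere in the paper is by itself too large), and one must use a second-moment bound rather than a Markov bound, since for $p$ close to $1$ the Markov estimate $\Pr\bigl[\sum_i Z_i > \ell\bigr] \le E\bigl[\sum_i Z_i\bigr]/\ell$ is far too weak.
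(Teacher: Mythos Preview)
Your proposal is correct and follows essentially the same route as the paper: analyse the unlimited-budget estimator via $E[Y_1]=C_{\mathrm{small}}/n$, $E[Y_1^2]\le 1$ and Chebyshev to get the $\tfrac{8}{9}(1-p)$ bound, then apply a second Chebyshev argument to the total query count to cap the budget-exhaustion probability by $\tfrac{1}{9}(1-p)$, and finish with a union bound. The only noteworthy difference is in the budget step: the paper bounds the cost of one call purely in terms of the random threshold, using that \texttt{1\_set}$(s,x)$ makes at most $\tfrac{x(x+1)}{2}$ queries regardless of the component containing $s$; this keeps the computation a one-variable sum over $j$ weighted by $j^{-2}-(j+1)^{-2}$, yielding $E[Q_i]\le\log_e(2/\delta)+\tfrac12$ and $E[Q_i^2]\le 2/\delta^2+1/\delta$, so the slack against $\ell$ is $4r$ and Chebyshev gives $\operatorname{Var}(Q_0)/(16r)<\tfrac{1}{9}(1-p)$. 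Your formulation in terms of $\min(n_{C_i},x_i+1)$ is tighter but unnecessarily entangles the vertex sample with the threshold; dropping the $n_{C_i}$ dependence, as the paper does, makes the calculation cleaner without losing anything needed for the bound.
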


\begin{proof}
Let $X_i$ be the value of $x$ in iteration $i$. Let $c$ be the number of connected components. Let $c^*$ be the number of connected components of size at most $\frac{2}{\delta}$. Let $C$ be the return value of the algorithm in Figure~\ref{Componentcountingalgorithm}. Let $\hat{c}$ be the return value when ignoring the query bound $\ell$.
Let $B_i$ be the value of $b$ in iteration $i$ if $b < X_i$ and $0$ otherwise. Let $C_s$ be the size of the connected components containing the node $s$.

\begin{align*}
E(B_i) &= \frac{1}{n} \sum_{s \in V} C_s P(X_i \geq C_s) = \frac{1}{n} \sum_{\substack{s \in V \\ C_s \leq \frac{2}{\delta}}} \frac{1}{C_s} = \frac{c^*}{n} \leq 1.\\
E(B_i^2) &= \frac{1}{n} \sum_{s \in V} C_s^2 P(X_i \geq C_s) = \frac{1}{n} \sum_{\substack{s \in V \\ C_s \leq \frac{2}{\delta}}} 1 \leq 1.\\
Var(B_i) &= E(B_i^2) - E(B_i)^2 \leq E(B_i^2) \leq 1.\\
\hat{c} &= \frac{a}{n}\sum_{i = 0}^{r - 1} + \frac{\delta n}{4}. \\
E(\hat{c}) &= \frac{rE(B_i)n}{r} + \frac{\delta n}{4} = c^* + \frac{\delta n}{4}. &
Var(\hat{c}) &= \frac{n^2}{r}Var(B_i) \leq \frac{n^2}{r}.
\end{align*}

The number of connected components bigger than $\frac{2}{\delta}$ is at most $\frac{\delta n}{2}$. This allows us to bound the probability of the estimate being far off~\cite{TchEbychef1867} (still ignoring the query bound $\ell$).

\begin{align*}
c - \frac{\delta n}{4} < c^* + \frac{\delta n}{4} = E(\hat{c}) = c^* + \frac{\delta n}{4} \leq c + \frac{\delta n}{4}.\\
P\left(\left|\hat{c} - E(\hat{c})\right| > \frac{3\delta n}{4}\right) < \frac{Var(\hat{c})}{\left(\frac{3\delta n}{4}\right)^2} \leq \frac{8}{9}(1 - p).
\end{align*}

Let $Q_i$ be the number of queries made by a call to 1\_set($s$, $X_i$) for a uniformly chosen node $s$ (i.\,e. we consider the behaviour of the algorithm as if the limit $\ell$ wasn't there). The procedure 1\_set can be implemented as a depth-first-search, resulting in at most $\frac{x(x + 1)}{2}$ queries for 1\_set($s$, $x$).

\begin{align*}
E(Q_i) &\leq \sum_{j = 1}^{\frac{2}{\delta} - 1} \left(\frac{1}{j^2} - \frac{1}{(j + 1)^2}\right)\frac{j(j + 1)}{2} + \frac{1}{(\frac{2}{\delta})^2} \frac{\frac{2}{\delta}(\frac{2}{\delta} + 1)}{2} =\\
&= \sum_{j = 1}^{\frac{2}{\delta} - 1} \left(\frac{2j + 1}{2j(j + 1)}\right) + \frac{2 + \delta}{4} \leq \log_e\left(\frac{2}{\delta}\right) + \frac{1}{2}.\\
E(Q_i^2) &\leq \sum_{j = 1}^{\frac{2}{\delta} - 1} \left(\frac{1}{j^2} - \frac{1}{(j + 1)^2}\right)\left(\frac{j(j + 1)}{2}\right)^2 + \frac{1}{(\frac{2}{\delta})^2} \left(\frac{\frac{2}{\delta}(\frac{2}{\delta} + 1)}{2}\right)^2 =\\
&= \sum_{j = 1}^{\frac{2}{\delta} - 1} \left(\frac{(j + 1)^2}{4} - \frac{j^2}{4}\right) + \frac{(\frac{2}{\delta} + 1)^2}{4} =
= \frac{1}{2}\sum_{j = 1}^{\frac{2}{\delta} - 1}j + \frac{1}{4}\left(\frac{2}{\delta} - 1\right) + \frac{(\frac{2}{\delta} + 1)^2}{4} = \frac{2}{\delta^2} + \frac{1}{\delta}.\\
Var(Q_i) &= E(Q_i^2) - E(Q_i)^2 \leq E(Q_i^2) \leq \frac{2}{\delta^2} + \frac{1}{\delta}.
\end{align*}

This allows us to bound the probability of hitting the query bound $\ell$~\cite{TchEbychef1867}.

\begin{align*}
P\left(\sum_{i = 0}^{r - 1}Q_i > \ell\right) \leq P\left(\sum_{i = 0}^{r - 1}Q_i > r\left(E(Q_0) + 4\right)\right) < \frac{Var(Q_0)}{16r} < \frac{1}{9}(1 - p).
\end{align*}

We can thus bound the probability of the algorithm returning a result outside the error bounds.

\begin{align*}
P\left(|C - c| > \delta n\right) \leq P\left(\sum_{i = 0}^{r - 1}Q_i > \ell \textrm{ or } \left|\hat{c} - E(\hat{c})\right| > \frac{3\delta n}{4}\right) \leq\\
\leq P\left(\sum_{i = 0}^{r - 1}Q_i > \ell\right) + P\left(\left|\hat{c} - E(\hat{c})\right| > \frac{3\delta n}{4}\right) < (1 - p).
\end{align*}
\end{proof}

\begin{lemma}\label{lem:componentcomplexity}
The algorithm in Figure~\ref{Componentcountingalgorithm} has query complexity $O(\frac{1}{(1 - p)\delta^2} \log_e(\frac{1}{\delta}))$.
\end{lemma}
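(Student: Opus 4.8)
The plan is to read the bound off the construction of the algorithm itself, since the parameter $\ell$ was introduced precisely in order to cap the query count. First I would observe that the only subroutine performing degree or $i$-th-neighbour queries is 1\_set, and it is always passed the by-reference budget $\ell$; by the stated semantics of that parameter, $\ell$ is decremented on every query and no query is performed once it has reached $0$. Consequently the total number of incidence-list queries made during the $r$ iterations of the outer loop is at most the initial value of $\ell$. The remaining operations in the loop --- the calls to rand\_index and rand\_range --- are uniform and weighted random sampling, a primitive available in all of the models under consideration and distinct from the two incidence-list queries, so they do not enter the query complexity.

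It then remains to estimate the initial value of $\ell$. Substituting $r = \frac{2}{(1 - p)\delta^2}$ yields
\begin{displaymath}
\ell = r\left(\log_e\left(\frac{2}{\delta}\right) + \frac{9}{2}\right) = \frac{2}{(1 - p)\delta^2}\left(\log_e 2 + \log_e\frac{1}{\delta} + \frac{9}{2}\right),
\end{displaymath}
and for $\delta \leq 1$ the parenthesised factor is $O\!\left(\log_e\frac{1}{\delta}\right)$, the additive constants being absorbed in the usual way. Hence $\ell \in O\!\left(\frac{1}{(1 - p)\delta^2}\log_e\frac{1}{\delta}\right)$, which is exactly the claimed query complexity.

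I do not expect a genuine obstacle here. The only point requiring care is the bookkeeping: one must confirm that 1\_set is indeed the sole source of queries and that it honours the budget $\ell$ faithfully, so that no queries can leak past the cap, and that the sampling primitives are legitimately excluded from the count. Note that this worst-case guarantee comes at the (provably small, by Lemma~\ref{lem:componentcorrectness}) cost of sometimes returning an answer outside the error bound when the budget is exhausted; the complexity statement itself, however, follows immediately once the estimate of $\ell$ above is in hand.
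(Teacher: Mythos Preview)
Your argument is correct and follows exactly the paper's own approach: the proof there consists of the single sentence ``This follows directly from the choice of $\ell$,'' and your proposal simply unpacks that observation by noting that $\ell$ caps all incidence-list queries and then expanding its initial value. No changes are needed.
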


\begin{proof}
This follows directly from the choice of $\ell$.
\end{proof}

While our algorithm for estimating the distance to connectivity did directly reuse the function 1\_set from the connectivity test, this doesn't work for general $k$-edge-connectivity, since for $k \geq 2$, not every $k$-set is a $k$-class-leaf. For $k = 2$ we use:

\begin{lemma}[Eswaran and Tarjan~\cite{Eswaran1976}]\label{Lemma:Eswaran1976}
Let $G = (V, E)$ be a graph. Let $c_1$ be the number of 2-class-leaves that are 1-class-leaves in $G$. Let $c_2$ be the number of 2-class-leaves that are not 1-class-leaves in $G$. If $|V| > 2$ and $c_1 + c_2 > 1$, the minimal number of edge modifications necessary to make $G$ $2$-edge-connected is $\left\lceil \frac{c_2}{2} \right\rceil + c_1$.
\end{lemma}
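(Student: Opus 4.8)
The plan is to reduce everything to the \emph{bridge forest} $F$ of $G$, obtained by contracting each maximal $2$-edge-connected subgraph of $G$ to a node and keeping the bridges of $G$ as the edges of $F$. The trees of $F$ are in bijection with the connected components of $G$; a $c_1$-type $2$-class-leaf (one that is a $1$-class-leaf) is a single-node tree of $F$, and a $c_2$-type $2$-class-leaf is a degree-$1$ node of a tree of $F$ with at least two nodes, so each nontrivial tree contributes at least two of them. First I would observe that a minimum modification sequence may be taken to use only edge \emph{additions}: deleting an edge never increases $d(U)$ for any $U$, hence never helps. One then checks that adding a set of edges to $G$ to make it $2$-edge-connected is equivalent, with the same optimal cardinality, to adding edges to $F$ to make $F$ bridgeless -- a cut of $F$ pulls back to a cut of $G$ of the same degree, and an added edge with both endpoints in one maximal $2$-edge-connected subgraph is wasteful because it crosses no cut of degree less than two. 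So it suffices to compute the minimum number of edges whose addition makes $F$ bridgeless.

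For the lower bound I would use a cut-counting argument. The $2$-class-leaves are exactly the vertex sets of the isolated and pendant maximal $2$-edge-connected components, hence pairwise disjoint; such a set $L$ has $d(L) = 0$ if it is $c_1$-type and $d(L) = 1$ if it is $c_2$-type. In the augmented graph $G'$ every such $L$ is a proper nonempty subset of $V$ (this uses $c_1 + c_2 > 1$) and $G'$ has no cut of degree $0$ or $1$, so $d_{G'}(L) \ge 2$; since there are no deletions, the added edges must supply at least $2 - d(L)$ endpoints leaving $L$. Summing over all $2$-class-leaves yields a demand of at least $2c_1 + c_2$ endpoints, and each added edge supplies at most $2$; therefore at least $\lceil (2c_1 + c_2)/2 \rceil = \lceil c_2/2 \rceil + c_1$ edges are required. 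The hypotheses $|V| > 2$ and $c_1 + c_2 > 1$ are precisely what excludes the degenerate situations (a graph already $2$-edge-connected, or one too small for an edge to be rerouted into the rest of the graph) in which this bound would be vacuous or unattainable.

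For the matching upper bound I would carry out the Eswaran--Tarjan leaf-pairing construction on $F$. One processes the pendant objects of $F$ -- its leaves and its isolated nodes -- repeatedly adding an edge between two of them so that the tree path (or, for an isolated node, the freshly created path) between them covers a stretch of bridges not previously covered; an edge between two leaves makes both of them internal and covers all bridges on their path, an edge to an isolated node turns it into a leaf while meeting one of the two units of its demand, and the process ends when every bridge lies on the fundamental cycle of some added edge and every former isolated node has degree two. One then notes that a connected graph all of whose bridges are covered in this way and all of whose nodes have degree at least two is bridgeless, hence $2$-edge-connected, and verifies that the greedy pairing can be completed with exactly $\lceil c_2/2 \rceil + c_1$ added edges.

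The hard part is making this last step close \emph{exactly} at $\lceil c_2/2 \rceil + c_1$ rather than at that value plus some lower-order slack: one must choose the pairing and the order in which the $c_1$ isolated nodes are incorporated so that bridges are never wastefully double-covered, which forces careful treatment of the parity of $c_2$, of trees with only two leaves, and of the interaction between merging distinct trees and decreasing the overall leaf count. This bookkeeping is the substance of~\cite{Eswaran1976}; for the present paper it would be enough to cite that result, giving explicitly only the reduction to the bridge forest and the cut lower bound sketched above.
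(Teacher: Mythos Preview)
The paper does not give a proof of this lemma at all: it is stated with attribution to Eswaran and Tarjan and then used as a black box in the analysis of the $2$-edge-connectivity distance estimator. There is therefore no ``paper's own proof'' to compare against; your write-up already goes well beyond what the paper does.

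That said, your sketch is exactly the classical Eswaran--Tarjan argument. The reduction to the bridge forest, the observation that deletions are useless (since $d_{G+A}(U)\ge d_{(G+A)\setminus D}(U)$ for every cut $U$), and the endpoint-counting lower bound $\bigl\lceil (2c_1+c_2)/2\bigr\rceil = c_1+\lceil c_2/2\rceil$ are all correct and cleanly stated; your identification of the $2$-class-leaves with the isolated vertices and pendants of the bridge forest matches the paper's terminology. You are also right that the only genuinely delicate step is arranging the leaf pairing so that the upper bound closes at exactly $c_1+\lceil c_2/2\rceil$ with no slack, and that this is precisely the content of~\cite{Eswaran1976}. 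For a paper that merely cites the result, your plan of giving the reduction and the lower bound explicitly while deferring the tight constructive upper bound to the reference is entirely appropriate.
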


Using classic terminology, $c_1$ is the number of connected components in $G$ that are $2$-edge-connected and $c_2$ is the number of $2$-edge-connected components in $G$ that are connected to the rest of the graph by a single bridge. To estimate the distance to $2$-edge-connectivity we want to estimate the numbers $c_1$ and $c_2$.

We use the algorithm in Figure~\ref{12leaf}, to find out if a node $s$ is in a $2$-edge-connected component of size up to $\mathfrak{r}$, if this is also a connected component, and its size. If $s$ is in a $2$-set of size up to $r + 1$, that is not a $1$-set, the first depth-first search will have a bridge connecting this $2$-set to the rest of the graph in its search tree. Thus, after the second search we know if $s$ is in a $2$-set of size up to $\mathfrak{r}$.
If we are in such a set, we do a third search to find out if this $2$-set is $2$-edge-connected and thus a $2$-class-leaf.
For every node it visits (except for the last one), the first search will issue queries to find more nodes. At each node, at most one query is made that finds a new node or tells us that there are no more neighbours. At each node, we could also make queries that give us edges to nodes that we already visited. The number of these queries is bounded by the nodes already visited. Thus, in the first search there at most $\sum_{i = 0}^r (1 + (i - 1)) = \frac{\mathfrak{r}(\mathfrak{r} + 1)}{2}$ queries. The second search is similar, but it will have to avoid certain edges, resulting in some additional queries; the number of these edges is $\mathfrak{r}$, we thus get a bound of $\frac{\mathfrak{r}(\mathfrak{r} + 1)}{2} + \mathfrak{r}$ queries. The third search also needs to avoid edges to nodes not found by the second search. But since the second search resulted in a $2$-set, there is only one such edge. The number of queries in the second search is thus bounded by $\sum_{i = 0}^{\mathfrak{r} - 1}(1 + (i - 1)) + (\mathfrak{r} - 1) + 1 = \frac{\mathfrak{r}(\mathfrak{r} - 1)}{2} + \mathfrak{r}$. This gives us a bound of at most $\frac{3\mathfrak{r}(\mathfrak{r} + 1)}{2}$ queries made by the algorithm in Figure~\ref{12leaf}.

\begin{figure}[!h]
\lstset{mathescape=true}
\begin{lstlisting}
int $\times$ int 1_2_leaf($s$, $\mathfrak{r}$)
{
  1 Do first depth-first search from $s$ for up to $\mathfrak{r} + 1$ nodes.
    Let $n_1$ be the number of nodes found.

  2 Do second depth-first search from $s$ for up to $\mathfrak{r} + 1$ nodes,
    never traversing an edge of the search tree of 1 in the same direction.
    Let $n_2$ be the number of nodes found.

  if ($n_1$ == $n_2$ == $\mathfrak{r} + 1$) // Not in a 2-set of size at most $\mathfrak{r}$
    return $(0, 0)$;

  3 Do third depth-first search from $s$ for up to $\mathfrak{r}$ nodes,
    only considering nodes found in 2,
    never traversing an edge of the search tree of 2 in the same direction.
    Let $n_3$ be the number of nodes found.

  if ($n_2$ == $n_3$ == $n_1$) // In a 2-class-leaf of size $n_3$ that is a 1-class-leaf
    return $(n_3, 0)$; 
  else if ($n_2$ == $n_3$) // In a 2-class-leaf of size $n_3$ that is not a 1-class-leaf
    return $(0, n_3)$; 
  else // In a $2$-set of size $n_2$, that is not a $2$-class-leaf
    return $(0, 0)$; 
}
\end{lstlisting}
\caption{\label{12leaf}1\_2\_leaf in pseudocode (implementation: zshg\_component2() in zshg2\_c.c)}
\end{figure}

\begin{figure}[!h]
\lstset{mathescape=true}
\begin{lstlisting}
float 1_2_leaves($\delta$, $p$, $n$)
{
  $r \coloneqq \frac{7}{2(1 - p)\delta^2}$;
  $a_1 \coloneqq a_2 \coloneqq 0$;

  $\ell \coloneqq r(3\log_e(\frac{2}{\delta}) + 6)$;

  for($i \coloneqq 0$; $i < r$; $i \coloneqq i + 1$)
  {
    $s \coloneqq$ rand_index($n$);
    $x \coloneqq$ rand_range($\frac{2}{\delta}$);
    $(b_1, b_2) \coloneqq$ 1_2_leaf($s$, $x$, &$\ell$);
    if($b_1 \neq 0$)
      $a_1 \coloneqq a_1 + b_1$;
    if($b_2 \neq 0$)
      $a_2 \coloneqq a_2 + b_2$;
  }

  return$\left(\frac{a_1n}{r}, \frac{a_2n}{r}\right)$;
}
\end{lstlisting}
\caption{\label{12leaves}Counting algorithm for $2$-edge-connectivity in C-like pseudocode (implementation: zshg2\_component() in zshg2\_c.c)}
\end{figure}

\begin{lemma}\label{lem:12leavescorrectness}
For $\delta \leq 1$, with probability at least $p$, from the return value of the algorithm in Figure~\ref{12leaves}, we get the minimum number of edge modifications necessary to make the input graph $2$-edge-connected up to an error of at most $\frac{1}{2} + \delta n$.
\end{lemma}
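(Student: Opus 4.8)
\medskip
\noindent\emph{Sketch of the intended proof.}
The plan is to run the argument of Lemma~\ref{lem:componentcorrectness} almost verbatim, but to carry two counters simultaneously and then combine their estimates through the Eswaran--Tarjan formula (Lemma~\ref{Lemma:Eswaran1976}). Write $c_1$ for the number of $2$-class-leaves of $G$ that are $1$-class-leaves and $c_2$ for the number that are not; under the hypotheses of Lemma~\ref{Lemma:Eswaran1976} the quantity to be estimated is $\tau \coloneqq \lceil c_2/2\rceil + c_1$, while the distance estimate obtained from the returned pair $(\hat c_1,\hat c_2) = (a_1 n/r,\, a_2 n/r)$ is $\hat\tau \coloneqq \hat c_1 + \tfrac12\hat c_2$. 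The summand $\tfrac12$ in the error bound is exactly what lets us pass from $\hat\tau$ to an integer answer and from $\tfrac12 c_2$ to $\lceil c_2/2\rceil$ for free. As in the proof of Lemma~\ref{lem:componentcorrectness}, I would first bound the error of the estimator in the idealised run where the query bound $\ell$ is never reached, then bound the probability of reaching $\ell$, and finally take a union bound.

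For the idealised run, let $X_i$ be the value of $x$ in iteration $i$. Since \texttt{rand\_range}$(2/\delta)$ makes $P(X_i \geq j) = j^{-2}$ for $1 \leq j \leq 2/\delta$ (the series telescopes exactly as in the component proof), and since the three searches of Figure~\ref{12leaf} certify, for a uniformly chosen $s$ and the chosen $X_i$, whether $s$ lies in a $2$-class-leaf $T$ with $|T| \leq X_i$ and, if so, whether $T$ is also a $1$-class-leaf (I take this behaviour as established by the discussion accompanying Figure~\ref{12leaf}), the per-iteration contributions $B_{1,i}$ (added to $a_1$) and $B_{2,i}$ (added to $a_2$) satisfy, after grouping the sum over the pairwise disjoint $2$-class-leaves, $E(B_{1,i}) = c_1^*/n$ and $E(B_{2,i}) = c_2^*/n$, where $c_1^*,c_2^*$ count only leaves of size at most $2/\delta$. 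The same grouping gives $E(B_{1,i}^2) = N_1/n$ and $E(B_{2,i}^2) = N_2/n$, with $N_1,N_2$ the total node counts of those small leaves; since distinct $2$-class-leaves are disjoint, $N_1 + N_2 \leq n$, hence $E(B_{1,i}^2) + E(B_{2,i}^2) \leq 1$, and moreover $B_{1,i}B_{2,i} = 0$ because the returned pair never has two nonzero entries. Therefore $Y_i \coloneqq B_{1,i} + \tfrac12 B_{2,i}$ has $Var(Y_i) \leq E(Y_i^2) = E(B_{1,i}^2) + \tfrac14 E(B_{2,i}^2) \leq 1$, the idealised estimator $\hat\tau_0 = \tfrac{n}{r}\sum_{i=0}^{r-1}Y_i$ has $E(\hat\tau_0) = c_1^* + \tfrac12 c_2^*$ and $Var(\hat\tau_0) \leq n^2/r$, and its bias is small because $2$-class-leaves of size greater than $2/\delta$ are disjoint, so fewer than $\delta n/2$ of them exist and $0 \leq \tau - E(\hat\tau_0) = (c_1 - c_1^*) + \tfrac12(c_2 - c_2^*) + (\lceil c_2/2\rceil - \tfrac12 c_2) < \tfrac12 + \tfrac{\delta n}{2}$. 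Chebyshev's inequality~\cite{TchEbychef1867} then bounds $P(|\hat\tau_0 - E(\hat\tau_0)| > t)$ by $n^2/(rt^2)$, which for the threshold $t$ just below $\delta n/2$ that the error budget leaves becomes a controlled fraction of $1-p$ once $r = \tfrac{7}{2(1-p)\delta^2}$ is substituted.

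For the query bound, the paragraph preceding Figure~\ref{12leaf} shows that a call to \texttt{1\_2\_leaf}$(s,X_i)$ issues at most $\tfrac{3X_i(X_i+1)}{2}$ queries, so the number $Q_i$ of queries in iteration $i$ has $E(Q_i)$ and $E(Q_i^2)$ at most three times, respectively nine times, the single-depth-first-search quantities computed in the proof of Lemma~\ref{lem:componentcorrectness}, namely $E(Q_i) \leq 3\log_e(2/\delta) + \tfrac32$ and $E(Q_i^2) \leq 9(2\delta^{-2} + \delta^{-1})$. With $\ell = r(3\log_e(2/\delta) + 6) = r\bigl(E(Q_0) + \tfrac92\bigr)$, Chebyshev gives $P\bigl(\sum_{i=0}^{r-1}Q_i > \ell\bigr) \leq r\,Var(Q_0)/(\tfrac92 r)^2$, again a controlled fraction of $1-p$ for that $r$. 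Union-bounding the two failure events, as at the end of the proof of Lemma~\ref{lem:componentcorrectness}, and then unwinding $|\hat\tau - \tau| \leq |\hat\tau - \hat\tau_0| + |\hat\tau_0 - E(\hat\tau_0)| + |E(\hat\tau_0) - \tau| < \tfrac12 + t + \bigl(\tfrac12 + \tfrac{\delta n}{2}\bigr)$, yields $P(|\hat\tau - \tau| > \tfrac12 + \delta n) < 1-p$, provided the threshold $t$ and the split of $1-p$ between the two failure events are chosen so that the arithmetic closes.

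That last proviso is where the real work sits. The allowance $\tfrac12 + \delta n$ is tight: after $\tfrac12$ is consumed by integrality and up to $\tfrac{\delta n}{2}$ by the truncation bias, only about $\tfrac{\delta n}{2}$ is left for the sampling deviation, so the two Chebyshev tails---whose magnitudes are governed by the constant $\tfrac72$ in $r$ and by the $+6$ (equivalently the $+\tfrac92$) in $\ell$---must be shown to add up to strictly less than $1-p$; a loose bound on $Var(Y_i)$ or on $Q_i$ would break this balance. The only genuinely new ingredients beyond the component-counting proof are the disjointness of $2$-class-leaves, which at once yields the joint second-moment bound $E(B_{1,i}^2) + E(B_{2,i}^2) \leq 1$ and the bound of fewer than $\delta n/2$ large leaves; the identity $B_{1,i}B_{2,i} = 0$, which lets the combined variable $Y_i = B_{1,i} + \tfrac12 B_{2,i}$ inherit variance at most $1$ rather than the larger value a crude triangle inequality would give; and the invocation of Lemma~\ref{Lemma:Eswaran1976} to turn estimates of $c_1$ and $c_2$ into an estimate of the edit distance.
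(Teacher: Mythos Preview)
Your outline follows the paper's proof closely and even improves on one step: the paper bounds $Var(\hat c_1+\tfrac12\hat c_2)$ via separate variances and a Cauchy--Schwarz bound on $Cov(\hat c_1,\hat c_2)$, arriving at $\tfrac{7}{4}\,\tfrac{n^2}{r}$, whereas your observation $B_{1,i}B_{2,i}=0$ gives $Var(Y_i)\le 1$ and hence $Var(\hat\tau_0)\le n^2/r$ directly. That is a genuine simplification.

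But the arithmetic does \emph{not} close the way you have it, and this is exactly the point where your sketch differs from the paper. You take $\hat\tau=\hat c_1+\tfrac12\hat c_2$ with no additive correction, so the bias satisfies $\tau-E(\hat\tau_0)\in[0,\tfrac12+\tfrac{\delta n}{2})$. The error budget $\tfrac12+\delta n$ then leaves only $t=\tfrac{\delta n}{2}$ for the sampling deviation, and Chebyshev gives
\[
P\bigl(|\hat\tau_0-E(\hat\tau_0)|>\tfrac{\delta n}{2}\bigr)\ \le\ \frac{n^2/r}{(\delta n/2)^2}\ =\ \frac{4}{r\delta^2}\ =\ \frac{8}{7}(1-p),
\]
which already exceeds $1-p$ before you add the query-bound tail. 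Your tighter variance does not compensate for the smaller threshold. The paper avoids this by \emph{centering} the estimate: it works with $\hat c\coloneqq \hat c_1+\tfrac12\hat c_2+\tfrac12+\tfrac{\delta n}{4}$, which shifts the bias to lie in $[-\tfrac12-\tfrac{\delta n}{4},\,\tfrac12+\tfrac{\delta n}{4}]$ and frees up $t=\tfrac{3\delta n}{4}$ for Chebyshev. With the paper's variance bound $\tfrac{7}{4}\,\tfrac{n^2}{r}$ this yields the sampling tail $\tfrac{8}{9}(1-p)$; with your sharper variance $n^2/r$ it would give $\tfrac{32}{63}(1-p)$, comfortably leaving room for the query-bound tail. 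So the missing ingredient in your argument is precisely this additive shift of the estimator; once you insert it, your cleaner variance computation makes the rest of the paper's proof go through with margin to spare.
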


\begin{proof}
Let $c_1^*$ be the number of 2-class-leaves that are 1-class-leaves in $G$ and of size at most $\frac{2}{\delta}$. Let $c_2^*$ be the number of 2-class-leaves that are not 1-class-leaves in $G$ and of size at most $\frac{2}{\delta}$.
The algorithm in Figure~\ref{12leaves} gives us an estimate $(\hat{C}_1, \hat{C}_2)$ for ($c_1^*, c_2^*)$ that we will later use to estimate the distance to $2$-edge-connectivity. Let $(\hat{c}_1, \hat{c}_2)$ be the return value of the algorithm in Figure~\ref{12leaves} when ignoring the query bound $\ell$.
Let $B_{1,i}$ be the value of $b_1$ in iteration $i$, let $B_{2,i}$ be the value of $b_2$ in iteration $i$. Let $C_{2,s}$ be the size of the $2$-class-leaf that contains $s$. Let $\hat{c} \coloneqq \frac{\hat{c}_2}{2} + \frac{1}{2} + \hat{c}_1 + \frac{\delta n}{2}$ be our estimate of the distance.
Similarly to the calculation for the distance to connectivity estimate, we get:

\begin{align*}
E(B_{1,i}) &= \frac{c_1^*}{n} \leq 1. \\
E(B_{2,i}) &= \frac{c_2^*}{n} \leq 1. \\
Var(B_{1,i}) &\leq 1. \\
Var(B_{2,i}) &\leq 1. \\
Cov(B_{1,i}, B_{2,i}) &\leq 1. \\
\hat{c_1} &= \frac{n}{r}\sum_{i = 0}^{r - 1}B_{1, i}. \\
\hat{c_2} &= \frac{n}{r}\sum_{i = 0}^{r - 1}B_{2, i}. \\
E(\hat{c}_1) &= r\frac{nE(B_{1,i})}{r} = c_1^*. \\
E(\hat{c}_2) &= r\frac{nE(B_{2,i})}{r} = c_2^*. \\
Var(\hat{c_1}) &= \frac{n}{r^2}Var(B_{1, i}) \leq \frac{n^2}{r}. \\
Var(\hat{c}_2) &= \frac{n}{r^2}Var(B_{2, i}) \leq \frac{n^2}{r}. \\
Cov(\hat{c}_1, \hat{c}_2) &\leq \sqrt{Var(\hat{c_1})}\sqrt{Var(\hat{c_2})} \leq \frac{n^2}{r}. \\
E(\hat{c}) &= \frac{c_2^*}{2} + \frac{1}{2} + c_1^* + \frac{\delta n}{4}. \\
Var(\hat{c}) &= \frac{1}{4}Var(\hat{c}_2) + Var(\hat{c}_1) + \frac{1}{2} Cov(\hat{c}_1, \hat{c}_2) \leq \frac{7}{4} \frac{n^2}{r}. \\
\end{align*}

The number of 2-class-leaves bigger than $\frac{2}{\delta}$ is at most $\frac{\delta n}{2}$. Ignoring the query bound for now, we bound the probability of the estimate being far off~\cite{TchEbychef1867}:

\begin{align*}
\left\lceil\frac{c_2}{2}\right\rceil + c_1 - \frac{1}{2} - \frac{\delta n}{4} \leq E(\hat{c}) \leq \left\lceil\frac{c_2}{2}\right\rceil + c_1 + \frac{1}{2} + \frac{\delta n}{4}. \\
P\left(\left|\hat{c} - E(\hat{c})\right| > \frac{3\delta n}{4}\right) < \frac{Var(\hat{c})}{\left(\frac{3\delta n}{4}\right)^2} \leq \frac{28}{9r\delta} = \frac{8}{9}(1 - p). \\
\end{align*}

Let $Q_i$ be the number of queries made by a call to 1\_2\_leaves in the algorithm in Figure~\ref{12leaves}.

\begin{align*}
E(Q_i) &\leq \sum_{j = 1}^{\frac{2}{\delta} - 1} \left(\frac{1}{j} - \frac{1}{(j + 1)^2}\right) \frac{3j(j + 1)}{2} + \frac{1}{\left(\frac{2}{\delta}\right)^2}\frac{3\frac{2}{\delta}(\frac{2}{\delta} + 1)}{2} = \\
&= \sum_{j = 1}^{\frac{2}{\delta} - 1} \frac{6j + 3}{3j(j + 1)} + \frac{6 + 3\delta}{16} < 3\log_e\left(\frac{2}{\delta}\right).\\
E(Q_i^2) &\leq \sum_{j = 1}^{\frac{2}{\delta} - 1}\left(\frac{1}{j^2} - \frac{1}{(j + 1)^2}\right)\left(\frac{3j(j + 1)}{2}\right)^2 + \frac{1}{\left(\frac{2}{\delta}\right)^2}\left(\frac{3\frac{2}{\delta}(\frac{2}{\delta} + 1)}{2}\right)^2 = \\
&= \sum_{j = 1}^{\delta{2}{\delta} - 1}\left(\frac{9(j + 1)^2}{4} - \frac{9j^2}{4}\right) + \frac{9(\frac{2}{\delta} + 1)^2}{4} = \frac{9}{\delta^2} + \frac{9}{2\delta}. \\
Var(Q_i) &= E(Q_i^2) - E(Q_i)^2 \leq \frac{9}{\delta^2} + \frac{9}{2\delta}. \\
\end{align*}

Which we use to bound the probability of hitting the query bound~\cite{TchEbychef1867}.

\begin{align*}
P\left(\sum_{i = 0}^{r - 1}Q_i > \ell\right) \leq P\left(\sum_{i = 0}^{r - 1}Q_i > r (E(Q_0) + 6)\right) < \frac{Var(Q_0)}{36r}< \frac{1}{9}(1 - p). \\
P\left(\left|\hat{C} - E(\hat{c})\right| > \frac{3\delta n}{4}\right) \leq P\left(\left|\hat{c} - E(\hat{c})\right| > \frac{3\delta n}{4}\right) + P\left(\sum_{i = 0}^{r - 1}Q_i > \ell\right) < 1 - p. \\
\end{align*}
\end{proof}

\begin{lemma}\label{lem:12leavescomplexity}
The algorithm in Figure~\ref{12leaves} has query complexity $O(\frac{1}{(1 - p)\delta^2} \log_e(\frac{1}{\delta}))$.
\end{lemma}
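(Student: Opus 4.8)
The plan is to argue exactly as in the proof of Lemma~\ref{lem:componentcomplexity}: the bound follows directly from the choice of $\ell$. The only place the algorithm in Figure~\ref{12leaves} issues queries is inside the calls to 1\_2\_leaf, and the counter $\ell$ passed by reference is decremented on every query, with all querying aborted once $\ell$ reaches $0$. Hence the total number of queries made by a full run is at most the initial value
\[
\ell = r\left(3\log_e\!\left(\tfrac{2}{\delta}\right) + 6\right), \qquad r = \frac{7}{2(1 - p)\delta^2}.
\]

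Substituting $r$ and using that $\log_e(2/\delta) = \Theta(\log_e(1/\delta))$ for $\delta \leq 1$ (so the additive constant $6$ and the factor $2$ inside the logarithm are absorbed into the $O(\cdot)$), this is
\[
\ell = \frac{7\bigl(3\log_e(\tfrac{2}{\delta}) + 6\bigr)}{2(1 - p)\delta^2} \in O\!\left(\frac{1}{(1 - p)\delta^2}\log_e\!\left(\tfrac{1}{\delta}\right)\right),
\]
which is the claimed query complexity. The remaining per-iteration work — one call to rand\_index, one call to rand\_range, and a constant number of comparisons and additions, repeated $r$ times — involves no oracle queries and is in any case dominated by $\ell$, so it does not affect the query-complexity bound.

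I expect no real obstacle here: the statement is an immediate corollary of how $\ell$ is defined, exactly as for the connected-components estimator. The one point worth making explicit is that we are measuring query complexity (queries to the incidence-lists oracle), so the hard cap $\ell$ is precisely the quantity to bound; and Lemma~\ref{lem:12leavescorrectness} already shows that hitting this cap is a low-probability event, so imposing the cap costs nothing in accuracy.
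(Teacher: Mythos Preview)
Your proposal is correct and follows exactly the paper's approach: the paper's own proof is simply ``This follows directly from the choice of $\ell$.'' Your expanded version just makes explicit the substitution of $r$ into $\ell$ and the absorption of constants, which is entirely in line with the intended one-line argument.
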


\begin{proof}
This follows directly from the choice of $\ell$.
\end{proof}

For general $k$, we use a generalization of Lemma~\ref{Lemma:Eswaran1976} above.

The \emph{edge-demand} of a set of nodes $U \subsetneq V$ is

\begin{displaymath}
\Phi_k(U) \coloneqq \max \left\{0, k - d(U), \sum_{W \sqsubset U}\Phi_k(W) \right\}, \\
\Phi_k(V) \coloneqq \sum_{W \sqsubset V}\Phi_k(W). \\
\end{displaymath}

\begin{lemma}[Naor et alii~\cite{Naor1997} for connected graphs, generalized by Marko~\cite{Marko2005}]\label{Lemma:Naor1997}
Let $G = (V, E)$ be a graph and $k \in \mathbb{N}$. The distance of $G$ to $k$-edge-connectivity is $\frac{1}{|E|}\left\lceil \frac{\Phi_k(V)}{2} \right\rceil$.
\end{lemma}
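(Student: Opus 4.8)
The plan is to reduce the assertion to the edge-connectivity augmentation problem and then pair a clean counting lower bound with the constructive upper bound of Naor, Gusfield and Martel (extended to disconnected graphs by Marko). First I would observe that edge deletions never help: if $G'$ is $k$-edge-connected and is obtained from $G$ by deleting a set $D$ of edges and inserting a set $A$ of edges, then $G + A \supseteq G'$ is again $k$-edge-connected, because inserting edges cannot shrink any cut, and it is reached from $G$ with only $|A| \le |D| + |A|$ modifications. Hence the distance to $k$-edge-connectivity equals $\frac{1}{|E|}$ times the minimum number $\mu$ of edges that must be \emph{added} to $G$ to make it $k$-edge-connected, and it suffices to prove $\mu = \lceil \Phi_k(V)/2 \rceil$.

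For the lower bound I would exploit that the extreme sets of $G$ form a laminar family (recalled in the Preliminaries), so $\sqsubset$ turns them into a forest; since each vertex lies in the extreme singleton $\{v\}$, every vertex lies in some maximal proper extreme set, and thus $\{W : W \sqsubset V\}$ is a partition of $V$. Working up this forest, I would show by induction on the extreme sets (ordered by inclusion) that for every $U \subsetneq V$ the number $\Phi_k(U)$ is a lower bound on the number of endpoints of added edges that lie in $U$, in any augmentation $G'$ of $G$ to a $k$-edge-connected graph: the value $k - d(U)$ is forced because at least $k - d(U)$ added edges must cross the boundary of $U$ to bring $d_{G'}(U)$ up to $k$, each contributing one endpoint to $U$; the value $\sum_{W \sqsubset U} \Phi_k(W)$ is forced because the children $W \sqsubset U$ are pairwise disjoint, so the inductive endpoint bounds add up; and $0$ is trivial, so their maximum is a lower bound as well. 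Applied to the partition $\{W : W \sqsubset V\}$ this gives $\Phi_k(V) = \sum_{W \sqsubset V} \Phi_k(W) \le$ (total number of endpoints of added edges) $= 2\mu$, hence $\mu \ge \lceil \Phi_k(V)/2 \rceil$.

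For the matching upper bound I would construct an augmentation using exactly $\lceil \Phi_k(V)/2 \rceil$ edges, and this is where I expect the real difficulty. The natural approach is induction on $\Phi_k(V)$: repeatedly choose two extreme sets still carrying residual edge-demand and add one edge between suitably chosen vertices of them, arranged so that the total edge-demand drops by exactly $2$; when $\Phi_k(V)$ is odd, the final edge absorbs the last unit of demand with one endpoint ``wasted'', which is exactly where the ceiling comes from. The delicate points --- that such a pair can always be selected, that the added edge does not push some other cut below $k$, and that the process terminates with $\Phi_k = 0$ (i.e.\ with a $k$-edge-connected graph) --- are precisely the content of the splitting-off / augmentation machinery in \cite{Naor1997} and its generalization in \cite{Marko2005}, which I would invoke rather than reprove; together with the lower bound this yields $\mu = \lceil \Phi_k(V)/2 \rceil$ and the claimed distance formula.
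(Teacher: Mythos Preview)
The paper does not give its own proof of this lemma: it is stated as a result of Naor, Gusfield and Martel (for connected graphs) together with Marko's extension, and is simply quoted and used. So there is no proof in the paper to compare against.

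Judged on its own, your sketch is sound and follows the standard route to this theorem. The reduction step (deletions never help, because $G+A \supseteq G'$ remains $k$-edge-connected) is correct and reduces the distance to the pure edge-augmentation number $\mu$. Your lower bound is a clean counting argument and uses exactly the two facts the paper records in its Preliminaries: that extreme sets form a laminar family, and the recursive definition of $\Phi_k$; the induction ``number of added-edge endpoints inside $U$ is at least $\Phi_k(U)$'' goes through because the children $W \sqsubset U$ are pairwise disjoint, and summing over the top-level partition $\{W : W \sqsubset V\}$ gives $2\mu \ge \Phi_k(V)$. For the upper bound you rightly identify the constructive augmentation (pairing up demands so that each added edge decreases $\Phi_k$ by~$2$, with one wasted endpoint when $\Phi_k(V)$ is odd) as the substantive part, and defer it to \cite{Naor1997,Marko2005}; that is exactly what the paper itself does by citing the lemma rather than proving it. One small caution: the claim ``$\{W : W \sqsubset V\}$ is a partition of $V$'' relies on singletons being extreme, which in turn needs the convention that the ``each subset'' clause in the definition of $\ell$-extreme ranges over nonempty proper subsets only; this is the intended reading, but worth stating explicitly if you write the argument out in full.
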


To estimate $\Phi_k(V)$ we can estimate the number of extreme sets $U$, for which $\Phi_k(U) \neq \sum_{W \sqsubset U}\Phi(W)$, and their $\Phi_k(U)$ and then get the estimate of $\Phi_k(V)$ as the weighted sum.

For $k = 3$, there are $4$ kinds of such $U$: 3-class-leaves that are connected components (for these, $\Phi_k = 3$), 3-class-leaves that are connected to the rest of the graph by a single bridge (for these, $\Phi_k = 2$), 3-class-leaves that are connected to the rest of the graph by a exactly two edges and are not inside the fourth kind (for these, $\Phi_k = 1$), connected components that are 2-edge-connected and contain exactly 2 3-edge-leaves (for these, $\Phi_k = 3$). Sets of the fourth kind contribute $3$ each to $\Phi_3$, but for the algorithm it is simpler to make them contribute $1$ and instead ignore the condition on sets of the third kind not being in the fourth kind.
Let $C_0$ be the number of of 3-class-leaves that are connected components, let $C_1$ be the number of of 3-class-leaves that are connected to the rest of the graph by a single bridge, let $C_2$ be the number of $3$-class-leaves that are connected to the rest of the graph by a exactly two edges, let $C_3$ be the number of connected components that are 2-edge-connected and contain exactly 2 3-edge-leaves.

\begin{displaymath}
\Phi_3(V) = 3C_0 + 2C_1 + 1C_2 + 1C_3 - k.
\end{displaymath}

To find out if a node $s$ is in any such set of size at most $\mathfrak{r}$, we use the algorithm in Figure~\ref{123leaf}. It returns the size of the containing sets: Figure~\ref{123leafcases} illustrates the possible cases: If $s$ is in a $3$-set larger than $\mathfrak{r}$, the algorithm returns $(0, 0, 0, 0)$ to contribute nothing to any estimate of $C_0, \ldots$. If $s$ is in a $3$-set $m$ of size at most $\mathfrak{r}$ that is also a connected component, the algorithm returns $(|m|, 0, 0, 0)$ to contribute to the estimate of $C_0$ only. If $s$ is in a $3$-set $m$ of size at most $\mathfrak{r}$ that is connected to the rest of the graph by a bridge, the algorithm returns $(0, |m|, 0, 0)$ to contribute to the estimate of $C_1$ only. If $s$ is in a $3$-set $m$ that is connected to another $3$-set $n$ by two edges, with $m \cup n$ being a connected component of size at most $\mathfrak{r}$, the algorithm returns $(0, 0, |m|, |m + n|)$ to contribute to the estimates of both $C_2$ and $C_3$. Otherwise, $s$ is in a $3$-set $m$ of size at most $\mathfrak{r}$ that is connected to the rest of the graph by two edges, the algorithm returns $(0, 0, |m|, 0)$ to contribute to the estimate of $C_3$ only.
To decide if a node is in a set of the fourth kind, the algorithm uses the plant graph~\cite{Dinitz1976} of the connected component, which can be computed in time quadratic in the number of nodes in the connected component~\cite{Karzanov1986}. This results in total query complexity $O(\mathfrak{r}^3)$

\begin{figure}[!h]
\centering
\includegraphics[scale=0.5]{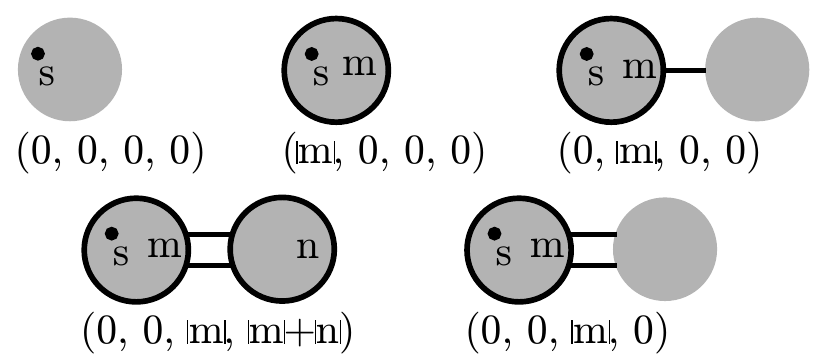}
\caption{\label{123leafcases}Cases for the algorithm in Figure~\ref{123leaf}}
\end{figure}

We can use this algorithm to construct an algorithm for estimating the distance to 3-edge-connectivity similar to how we did so for 2-edge-connectivity above.

\begin{figure}[!h]
\lstset{mathescape=true,basicstyle=\small\ttfamily}
\begin{lstlisting}
int $\times$ int $\times$ int $\times$ int 1_2_3_leaf($s$, $\mathfrak{r}$)
{
  0 Do a depth-first search from $s$ for up to $\mathfrak{r} + 1$ nodes.
    Let $n_0$ be the number of nodes found. Let $m_0$ be the set of nodes found.

  1 Do a depth-first search from $s$ for up to $\mathfrak{r} + 1$ nodes,
    never traversing an edge of the search tree of 1 in the same direction.
    Let $n_1$ be the number of nodes found.

  if($n_0 \leq \mathfrak{r}$ && $n_0$ == $n_1$) // $s$ is in a 2-connected connected component of size at most $\mathfrak{r}$
    {
      Compute the plant graph $\Gamma$ of the subgraph induced by $m_0$.
      if ($\Gamma$ has exactly 2 nodes of degree 1 and no cycles)
        $n_2 \coloneqq n_0$;
      else
        $n_2 \coloneqq 0$;
    }
  else
    $n_2 \coloneqq 0$;

  2 For every edge traversed in the first depth-first search,
    invoke 2_set searching for up to $\mathfrak{r} + 1$ nodes in the graph with that edge omitted.
    Let $m'$ be a smallest set found by 2_set.

  if ($|m|$ > $\mathfrak{r}$) // Not in a 3-set of size at most $\mathfrak{r}$
    return $(0, 0, 0, n_2)$;

  3 Do a depth-first search from $s$ for up to $|m'|$ nodes in the subgraph induced by $m'$.

  4 Do depth-first search from $s$ for up to $|m'|$ nodes in the subgraph induced by $m'$,
    never traversing an edge of the search tree of 3 in the same direction.

  5 For every edge traversed in the previous depth-first search,
    invoke 2_set searching for up to $|m'|$ nodes in the subgraph induced by $m'$ with that edge omitted.
    Let $m$ be a smallest set found by 2_set.

  if ($|m| \neq |m'|$) // In a 3-set of size $|m|$ that is not a 3-class-leaf
    return $(0, 0, 0, n_2)$; 

  if($d(m)$ == 0) // In a connected component that is a 3-class leaf.
    return $(|m|, 0, 0, n_2)$;
  else if ($d(m)$ == 1) // In a 3-class leaf connected to the rest by a single bridge
    return $(0, |m|, 0, n_2)$;
  else // $d(m)$ == 2 // In a 3-class leaf connected to the rest by two edges
    return $(0, 0, |m|, n_2)$;
}
\end{lstlisting}
\caption{\label{123leaf}1\_2\_3\_leaf in pseudocode}
\end{figure}

From our estimate of the number of components, we also obtain an estimate for the distance to eulerianity (Figure~\ref{eulerianitydistancealgorithm}). The algorithm \texttt{1\_sets\_even} is a variant of \texttt{1\_sets} that only counts components in which all nodes have even degree.

\begin{figure}[!h]
\lstset{mathescape=true}
\begin{lstlisting}
float euler_distance($\epsilon$, $p$, $n$)
{
  	$\delta \coloneqq \frac{\epsilon}{d}$;
	$c \coloneqq$ 1_sets_even$\left(\frac{\delta}{2}, \frac{1 - p}{2}\right)$;
	$m \coloneqq \frac{2}{\delta^2(1 + p)}$;
	$a \coloneqq 0$;

	for($i \coloneqq 0$; $i < m$; $i \coloneqq i + 1$)
		if($d(s)$ odd)
			$a \coloneqq a + 1$;
	$u \coloneqq \frac{an}{m}$;
	$e \coloneqq c + \frac{u}{2}$;

	return$\left(\frac{e}{nd}\right)$;
}
\end{lstlisting}
\caption{\label{eulerianitydistancealgorithm}Algorithm for estimating the distance to eulerianity in C-like pseudocode (the implementation resides in zshg\_euler\_distance() in zshg\_c.c)}
\end{figure}

\begin{lemma}\label{lem:eulerianitydistancecorrectness}
With probability at least $p$, value returned by the algorithm in Figure~\ref{eulerianitydistancealgorithm} is the distance of the graph to eulerianity with an error of at most $\epsilon$.
\end{lemma}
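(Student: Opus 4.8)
The plan is to split the claim into a structural identity expressing the distance to eulerianity through two quantities that the algorithm of Figure~\ref{eulerianitydistancealgorithm} samples, and then an error analysis of the two estimators it runs, closing with a union bound.

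\emph{Structural step.} Let $c_{\mathrm e}$ be the number of connected components of $G$ all of whose vertices have even degree (isolated vertices included), and let $u^{\ast}$ be the number of odd-degree vertices. I would first show that the minimum number $M$ of edge modifications turning $G$ into an eulerian graph equals $c_{\mathrm e}+\tfrac{u^{\ast}}{2}$ up to an additive constant, the constant absorbing the degenerate case in which $G$ is already connected (compare the hypothesis $c_1+c_2>1$ in Lemma~\ref{Lemma:Eswaran1976}). The upper bound is a sharpened version of the construction in the proof of Lemma~\ref{lem:eulerianityconnectivity}: add $\tfrac{u^{\ast}}{2}$ edges pairing up the odd-degree vertices, choosing the pairing so that it also joins into one block every component carrying an odd-degree vertex (possible since $u^{\ast}$ is even and each such component carries at least two odd-degree vertices), and then thread the remaining $c_{\mathrm e}$ even components onto a single cycle through one representative of each, which costs $c_{\mathrm e}+O(1)$ further edges and preserves all parities. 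For the lower bound one observes that each modification flips the parity of at most two vertices, forcing $\tfrac{u^{\ast}}{2}$ of them, while each even component must be joined to the rest by added edges, and the number of added edges leaving an even component must be positive and even, hence at least two; a careful accounting then shows that the parity demand and this connectivity demand add rather than overlap, giving $M\ge c_{\mathrm e}+\tfrac{u^{\ast}}{2}-O(1)$. This lower bound is the part I expect to be the main obstacle; the cleanest way to organise it is to fit ``eulerian $=$ connected $+$ all degrees even'' into the demand-function framework of Lemma~\ref{Lemma:Naor1997}, with the degree-parity constraint playing the role of the connectivity demand.

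\emph{Estimator step.} The call \texttt{1\_sets\_even}$(\tfrac{\delta}{2},\tfrac{1-p}{2})$ is the variant of the algorithm of Figure~\ref{Componentcountingalgorithm} that only counts components all of whose vertices have even degree; restricting the depth-first search this way affects none of the moment computations, so by the analogue of Lemma~\ref{lem:componentcorrectness} its output $c$ satisfies $|c-c_{\mathrm e}|\le\tfrac{\delta}{2}n$ with probability at least $1-\tfrac{1-p}{2}$. For the odd-degree count, the loop samples $m$ uniform vertices and returns $u=\tfrac{an}{m}$; the sampled indicators are i.i.d.\ with mean $\tfrac{u^{\ast}}{n}$ and variance at most $\tfrac14$, so $\mathrm{Var}(u)\le\tfrac{n^{2}}{4m}$ and Chebyshev's inequality gives $\Pr\bigl(|u-u^{\ast}|>\tfrac{\delta}{2}n\bigr)<\tfrac{1}{m\delta^{2}}\le\tfrac{1-p}{2}$ for the chosen $m$ (reading the ``$1+p$'' in the pseudocode as the ``$1-p$'' this bound needs). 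A union bound then yields that with probability at least $p$ both estimates lie within $\tfrac{\delta}{2}n$ of their targets.

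\emph{Combining.} On that good event, $e=c+\tfrac{u}{2}$ obeys $\bigl|e-(c_{\mathrm e}+\tfrac{u^{\ast}}{2})\bigr|\le|c-c_{\mathrm e}|+\tfrac12|u-u^{\ast}|\le\tfrac{3\delta}{4}n$, and combining with the structural identity this gives $|e-M|\le\tfrac{3\delta}{4}n+O(1)$. Dividing by $nd$ and using $\delta=\tfrac{\epsilon}{d}$ together with $d\ge1$ (the case $d<1$ being dispatched before \texttt{euler\_distance} is reached), the returned value $\tfrac{e}{nd}$ differs from the distance $\tfrac{M}{nd}$ by at most $\tfrac{3\delta}{4d}+o(1)\le\epsilon$, as claimed. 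The only remaining care is that the additive $O(1)$ from the structural identity be swallowed by the error budget, which it is because that budget is measured in the coarse unit $nd=2|E|$; this is the same slack that already surfaces as the ``$+\tfrac12$'' in Lemma~\ref{lem:12leavescorrectness}.
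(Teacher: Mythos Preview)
Your proposal is correct and follows essentially the same route as the paper: establish the structural identity $M=c_{\mathrm e}+\tfrac{u^{\ast}}{2}$ (up to an additive constant), analyse the two estimators by Chebyshev, and combine via a union bound. The only noteworthy difference is emphasis: the paper dispatches the lower bound in one sentence with exactly the direct counting argument you sketch (two added edge-endpoints per even component, one modification per odd vertex) and does not invoke the Naor--Gusfield--Martel framework at all, so your worry that this is ``the main obstacle'' and needs Lemma~\ref{Lemma:Naor1997} is unnecessary---your own sketch already matches what the paper does.
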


\begin{proof}
A graph is eulerian if it is connected and all nodes have even degree~\cite{Euler1741,Listing1847,Hierholzer1873}.

Let $C$ be the number of connected components, in which all nodes have even degree. Let $U$ be the number of nodes of odd degree. Then the minimum number of edge modifications necessary to make the graph connected is $C + \frac{U}{2}$: In every component, the number of nodes of odd degree is even~\cite{Koenig1935}, so one can add a cycle that goes through all components (using preexisting edges within components that contain nodes of odd degree), requiring only that number of edges. It is not possible to do with fewer edge modifications: At every component that contains only nodes of even degree, we have to add at least 2 incident edges to a node, and at every node of odd degree in the graph, we have to add or remove an edge.
Let q be the probability of the algorithm returning a value outside the error bound.

\begin{align*}
E(c) = C, E(a_i) = \frac{U}{n}, E(a_i^2) = \frac{U}{n}, Var(u) \leq \frac{U}{n}. \\
E(u) = U, Var(u) = \frac{n^2}{m} Var(a_i) \leq \frac{n}{m}U \leq \frac{n^2}{m}. \\
P(|u - U| > \epsilon dn) < \frac{Var(u)}{\epsilon^2 d^2n^2} \leq \frac{1}{m\epsilon^2d^2} = \frac{1 - p}{2}. \\
q \leq P(|c - C| > \frac{\epsilon dn}{2}) + P(|u - U| > \epsilon dn) \leq 1 - \frac{1 - p}{2} + 1 - \frac{1 + p}{2} = 1 - p. \\
\end{align*}
\end{proof}

\begin{lemma}\label{lem:eulerianitydistancecomplexity}
The algorithm in Figure~\ref{eulerianitydistancealgorithm} has query complexity $O\left(\frac{1}{(1 - p)\delta^2} \log_e(\frac{1}{\delta})\right)$.
\end{lemma}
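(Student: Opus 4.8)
The plan is to read off the query complexity directly from the structure of \texttt{euler\_distance}, which issues queries in exactly two places: the single call to \texttt{1\_sets\_even}$\left(\frac{\delta}{2}, \frac{1-p}{2}\right)$, and the sampling loop that runs $m = \frac{2}{\delta^2(1+p)}$ times and makes one degree query per iteration. I would bound each contribution separately and then add them.

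For the first contribution I would invoke Lemma~\ref{lem:componentcomplexity}. The procedure \texttt{1\_sets\_even} differs from \texttt{1\_sets} only in that, while performing (and after finishing) each depth-first search, it additionally queries the degree of every node it visits in order to decide whether the discovered component consists solely of even-degree nodes. Since a search that finds up to $x+1$ nodes already costs $\Theta(x^2)$ queries in the worst case, the at most $x+1$ extra degree queries are absorbed into that same bound, and the internal query cap $\ell$ grows only by a constant factor; hence the complexity bound of Lemma~\ref{lem:componentcomplexity} carries over verbatim to \texttt{1\_sets\_even}. Substituting $\delta \mapsto \frac{\delta}{2}$ and $p \mapsto \frac{1-p}{2}$ (so that the factor $1-p$ in the bound becomes $1 - \frac{1-p}{2} = \frac{1+p}{2}$), the call to \texttt{1\_sets\_even} uses $O\!\left(\frac{1}{(\frac{1+p}{2})(\frac{\delta}{2})^2}\log_e\frac{2}{\delta}\right) = O\!\left(\frac{1}{(1+p)\delta^2}\log_e\frac{1}{\delta}\right)$ queries.

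For the second contribution, the loop executes $m = \frac{2}{\delta^2(1+p)}$ iterations, each issuing $O(1)$ queries, contributing $O(m) = O\!\left(\frac{1}{\delta^2}\right)$ queries in total. Adding the two bounds and using $\frac{1}{1+p} \le 1 \le \frac{1}{1-p}$ for $p \in [0,1)$, both terms are $O\!\left(\frac{1}{(1-p)\delta^2}\log_e\frac{1}{\delta}\right)$, which is the claimed bound. I do not expect a genuine obstacle here; the only point that needs a sentence of justification is that the extra even-degree bookkeeping inside \texttt{1\_sets\_even} does not inflate the asymptotics, and that holds because each visited node costs only $O(1)$ additional queries while the searches already incur $\Omega(\text{number of visited nodes})$ queries regardless.
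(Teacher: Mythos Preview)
Your proposal is correct and follows exactly the same decomposition as the paper: bound the call to \texttt{1\_sets\_even} via Lemma~\ref{lem:componentcomplexity} and then bound the $m$ degree queries separately. Your version is in fact more careful than the paper's one-line proof, since you explicitly justify why the even-degree bookkeeping does not change the asymptotics and you track the parameter substitution $p \mapsto \tfrac{1-p}{2}$ (yielding a $\tfrac{1}{1+p}$ factor that is then absorbed into $\tfrac{1}{1-p}$), whereas the paper simply asserts both points.
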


\begin{proof}
For \texttt{1\_sets\_even} we get that complexity like in Lemma~\ref{lem:componentcomplexity} and the number of further queries is $m \in O\left(\frac{1}{\delta^2(1 - p)}\right) \subseteq O\left(\frac{1}{(1 - p)\delta^2} \log_e(\frac{1}{\delta})\right)$.
\end{proof}

\begin{theorem}
There are estimates for the distance to connectivity, $2$-edge-connectivity and eulerianity with complexity $O(\frac{1}{(1 - p)\delta^2} \log_e(\frac{1}{\delta}))$.
There is an estimate for the distance to $3$-edge-connectivity with complexity $O(\frac{1}{(1 - p)\delta^3} \log_e(\frac{1}{\delta}))$.
\end{theorem}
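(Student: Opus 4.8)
The plan is to read the theorem off the lemmas already proved, dealing with the three $O\bigl((1-p)^{-1}\delta^{-2}\log\delta^{-1}\bigr)$ estimates first and then building the $3$-edge-connectivity estimate by analogy with the $2$-edge-connectivity one.

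For connectivity, the distance to connectivity (as a fraction of $|E|$) equals $\frac{1}{|E|}\max\{0, c - 1\}$, where $c$ is the number of connected components, since $c - 1$ edge insertions suffice and no fewer do. So I would run the algorithm of Figure~\ref{Componentcountingalgorithm} and output $(\hat c - 1)/|E|$ (equivalently, report the distance relative to $|V|$ as $(\hat c - 1)/|V|$). By Lemma~\ref{lem:componentcorrectness} the estimate $\hat c$ is within $\delta|V|$ of $c$ with probability at least $p$, and since $\delta|V| = \epsilon|E|$ this is an additive error of $\epsilon$ in the relative distance; Lemma~\ref{lem:componentcomplexity} supplies the query complexity. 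For $2$-edge-connectivity, Lemma~\ref{lem:12leavescorrectness} already states that the algorithm of Figure~\ref{12leaves} returns the minimum number of edge modifications making $G$ $2$-edge-connected (which by Lemma~\ref{Lemma:Eswaran1976} is $\lceil c_2/2\rceil + c_1$) up to an additive error of $\tfrac12 + \delta|V|$, with probability at least $p$; dividing by $|E|$ turns this into the relative distance, and Lemma~\ref{lem:12leavescomplexity} gives the complexity. For eulerianity, Lemmata~\ref{lem:eulerianitydistancecorrectness} and~\ref{lem:eulerianitydistancecomplexity} are precisely the correctness and complexity statements for the algorithm of Figure~\ref{eulerianitydistancealgorithm}. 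So the first three claims are immediate.

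For $3$-edge-connectivity I would set up the estimator in direct analogy with Figure~\ref{12leaves}: sample $r = \Theta\bigl((1-p)^{-1}\delta^{-2}\bigr)$ nodes $s$ uniformly, for each draw a bound $x$ from \texttt{rand\_range}$(2/\delta)$, call \texttt{1\_2\_3\_leaf}$(s,x)$ of Figure~\ref{123leaf} under a shared query budget $\ell$, and accumulate four separate weighted sums, obtaining estimators $\hat C_0,\hat C_1,\hat C_2,\hat C_3$ of the numbers $C_0^*,\dots,C_3^*$ of the four kinds of extreme sets of size at most $2/\delta$. By the identity $\Phi_3(V) = 3C_0 + 2C_1 + C_2 + C_3 - k$ and Lemma~\ref{Lemma:Naor1997}, the output $\frac{1}{|E|}\bigl\lceil\tfrac12(3\hat C_0 + 2\hat C_1 + \hat C_2 + \hat C_3 - k)\bigr\rceil$ estimates the distance to $3$-edge-connectivity. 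Correctness is a Chebyshev argument with the same structure as in the proof of Lemma~\ref{lem:12leavescorrectness}: the per-iteration contributions have expectation and variance $O(1)$, so the weighted estimator has variance $O(|V|^2/r)$ after absorbing the finitely many covariance terms; the number of extreme sets of size exceeding $2/\delta$ is $O(\delta|V|)$; and the probability of overrunning the budget is bounded by a second Chebyshev estimate on $\sum_i Q_i$, with $Q_i$ the number of queries of the $i$-th call. Since \texttt{1\_2\_3\_leaf}$(s,\mathfrak{r})$ makes $O(\mathfrak{r}^3)$ queries (the plant-graph computation being quadratic in the component size), $E(Q_i) = \sum_j\bigl(j^{-2}-(j+1)^{-2}\bigr)O(j^3) = O(1/\delta)$ and $\mathrm{Var}(Q_i) = O(\delta^{-4})$, so choosing $\ell = \Theta\bigl(rE(Q_0) + \sqrt{r\,\mathrm{Var}(Q_0)/(1-p)}\bigr) = \Theta\bigl((1-p)^{-1}\delta^{-3}\bigr)$ makes the overrun probability small; this budget dominates the cost, giving query complexity $O\bigl((1-p)^{-1}\delta^{-3}\bigr)$, hence in particular the claimed bound $O\bigl((1-p)^{-1}\delta^{-3}\log\delta^{-1}\bigr)$.

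The first three estimators are routine, amounting to restating earlier lemmas modulo the passage between ``number of components / number of edge modifications'' and ``relative distance''. The main obstacle is the $3$-edge-connectivity estimate, for which the excerpt gives neither an explicit driver algorithm nor a correctness lemma: one must (i) check that \texttt{1\_2\_3\_leaf} correctly classifies a node into the right one of the four kinds --- in particular that the test ``$\Gamma$ has exactly two degree-$1$ nodes and no cycles'' detects membership in a $2$-edge-connected connected component containing exactly two $3$-edge-leaves, and that making such sets contribute weight $1$ rather than $3$ while dropping the side condition on the third kind leaves the value $\Phi_3(V) = 3C_0 + 2C_1 + C_2 + C_3 - k$ correct; (ii) confirm the uniform $O(\mathfrak{r}^3)$ query bound for \texttt{1\_2\_3\_leaf}; and (iii) carry the two Chebyshev bounds through for the four-component weighted estimator, keeping track of the covariances. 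None of these is deep, but (i) is where the genuine verification effort lies.
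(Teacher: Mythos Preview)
Your proposal is correct and follows the paper's approach exactly: the paper's own proof simply cites Lemmata~\ref{lem:componentcorrectness}--\ref{lem:componentcomplexity} for connectivity, Lemmata~\ref{Lemma:Eswaran1976}, \ref{lem:12leavescorrectness}, \ref{lem:12leavescomplexity} for $2$-edge-connectivity, Lemmata~\ref{lem:eulerianitydistancecorrectness}--\ref{lem:eulerianitydistancecomplexity} for eulerianity, and for $3$-edge-connectivity merely points at Lemma~\ref{Lemma:Naor1997} and Figure~\ref{123leaf} without further detail. You have in fact supplied considerably more than the paper does for the $3$-edge-connectivity case (the sampling driver, the Chebyshev arguments, and the budget calculation $\ell = \Theta\bigl((1-p)^{-1}\delta^{-3}\bigr)$, which is even slightly sharper than the stated bound); the verification tasks you flag in (i)--(iii) are real but the paper leaves them implicit as well.
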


\begin{proof}
For the distance to connectivity this follows from Lemmata \ref{lem:componentcorrectness}, \ref{lem:componentcomplexity}.
For the distance to $2$-edge-connectivity, this follows from Lemmata \ref{Lemma:Eswaran1976}, \ref{lem:12leavescorrectness}, \ref{lem:12leavescomplexity}.
For the distance to eulerianity, this follows from Lemmata \ref{lem:eulerianitydistancecorrectness}, \ref{lem:eulerianitydistancecomplexity}.
For the distance to $3$-edge-connectivity, this can be proven using Lemma \ref{Lemma:Naor1997} and the algorithm in Figure~\ref{123leaf}.
\end{proof}

\section{Tolerant Connectivity Test}\label{Tolerant}

\begin{figure}[!h]
\lstset{mathescape=true}
\begin{lstlisting}
bool zshg_tolerant($\epsilon_1$, $\epsilon_2$, $p$, $d$, $n$)
{
  $\delta \coloneqq \frac{\epsilon_2 - \epsilon_1}{4}d$;
  $e \coloneqq$ 1_sets($\delta$, $p$, $n$) - 1;
  return$\left(e \leq \frac{(\epsilon_1 + \epsilon_2)dn}{4}\right)$;
}
\end{lstlisting}
\caption{\label{Tolerantalgorithm}Tolerant connectivity testing algorithm in C-like pseudocode (the implementation resides in zshg\_tolerant() in zshg\_c.c)}
\end{figure}

\begin{lemma}\label{lem:tolerantcorrectness}
For any graph that is $\epsilon_1$-close to connectivity, the algorithm in Figure~\ref{Tolerantalgorithm} returns true with probability at least $p$.
For any graph that is $\epsilon_2$-far from connectivity, the algorithm in Figure~\ref{Tolerantalgorithm} returns false with probability at least $p$.
\end{lemma}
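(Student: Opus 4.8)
The plan is to reduce both statements to Lemma~\ref{lem:componentcorrectness} together with the exact value of the distance to connectivity. First I would record the elementary fact that a graph with $c$ connected components can be made connected by exactly $c-1$ edge insertions and by no fewer edge modifications: each inserted edge reduces the number of components by at most one, deletions never reduce it, and inserting the edges of a spanning tree on the set of components achieves $c-1$. Hence the distance of $G$ to connectivity is $(c-1)/|E|$, so $G$ is $\epsilon_1$-close to connectivity iff $c-1 \le \epsilon_1 |E|$, and $G$ is $\epsilon_2$-far from connectivity iff $c-1 > \epsilon_2 |E|$.

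Next I would put the three relevant quantities on a common scale via $|E| = dn/2$. The threshold tested by the algorithm is $\frac{(\epsilon_1+\epsilon_2)dn}{4} = \frac{\epsilon_1+\epsilon_2}{2}|E|$, i.e.\ the midpoint of $\epsilon_1|E|$ and $\epsilon_2|E|$; and the additive error guaranteed by Lemma~\ref{lem:componentcorrectness} for the call \texttt{1\_sets} with parameters $\delta = \frac{\epsilon_2-\epsilon_1}{4}d$ and $p$ is $\delta n = \frac{\epsilon_2-\epsilon_1}{4}dn = \frac{\epsilon_2-\epsilon_1}{2}|E|$, which is exactly half the gap between $\epsilon_1|E|$ and $\epsilon_2|E|$. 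One should check that $\delta \le 1$ so that Lemma~\ref{lem:componentcorrectness} applies; this is the relevant parameter regime, since the distance to connectivity never exceeds $(n-1)/|E| < 2/d$, so the $\epsilon_2$-far case is vacuous once $\epsilon_2 \ge 2/d$.

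Now condition on the event $\mathcal{E}$, which by Lemma~\ref{lem:componentcorrectness} has probability at least $p$, that the value $\hat c$ returned by \texttt{1\_sets} satisfies $|\hat c - c| \le \delta n$; on $\mathcal{E}$ the algorithm sets $e = \hat c - 1$, so $|e-(c-1)| \le \delta n$. If $G$ is $\epsilon_1$-close, then on $\mathcal{E}$ we get $e \le (c-1) + \delta n \le \epsilon_1|E| + \frac{\epsilon_2-\epsilon_1}{2}|E| = \frac{\epsilon_1+\epsilon_2}{2}|E| = \frac{(\epsilon_1+\epsilon_2)dn}{4}$, so the algorithm returns true. If $G$ is $\epsilon_2$-far, then on $\mathcal{E}$ we get $e \ge (c-1) - \delta n > \epsilon_2|E| - \frac{\epsilon_2-\epsilon_1}{2}|E| = \frac{\epsilon_1+\epsilon_2}{2}|E| = \frac{(\epsilon_1+\epsilon_2)dn}{4}$, so the algorithm returns false. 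In each case the desired outcome occurs on an event of probability at least $p$, which is what is claimed.

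I do not expect a real obstacle here: the argument is essentially bookkeeping, and the one point that genuinely needs care is keeping the three normalizations consistent — edge modifications measured against $|E|$ (the definition of $\epsilon$-far), the additive error $\delta n$ supplied by Lemma~\ref{lem:componentcorrectness}, and the $dn$-scaled comparison hard-coded in Figure~\ref{Tolerantalgorithm} — and verifying that the constant $\frac{1}{4}$ in $\delta = \frac{\epsilon_2-\epsilon_1}{4}d$ is exactly what makes the estimation error equal to half the $\epsilon_1$--$\epsilon_2$ gap, so that the midpoint threshold cleanly separates the two cases. Degenerate inputs ($n \le 1$, or $\epsilon_1 = \epsilon_2$) are dispatched directly.
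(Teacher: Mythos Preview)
Your proposal is correct and follows essentially the same route as the paper: both reduce to Lemma~\ref{lem:componentcorrectness}, rewrite the threshold $\frac{(\epsilon_1+\epsilon_2)dn}{4}$ as the midpoint $\frac{\epsilon_1+\epsilon_2}{2}|E|$ and the allowed error $\delta n$ as half the gap $\frac{\epsilon_2-\epsilon_1}{2}|E|$, and then do a case split on $\epsilon_1$-close versus $\epsilon_2$-far. Your version is a touch more explicit than the paper's in justifying that the distance to connectivity is exactly $c-1$ and in checking the applicability condition $\delta \le 1$, but the argument is the same.
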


\begin{proof}
Let $m = \frac{dn}{2}$ be the number of edges in the graph.

Case 1: The graph is $\epsilon_1$-close to connectivity, i.\,e. it consists of at most $\epsilon_1m + 1$ connected components.

\begin{gather*}
P\left(e > \frac{(\epsilon_1 + \epsilon_2)dn}{4}\right) = P\left(e > \frac{\epsilon_1 + \epsilon_2}{2}m\right) \leq\\
P\left(\textrm{1\_sets}(\delta, p, n) > \frac{\epsilon_1 + \epsilon_2}{2}m + 1\right) = P\left(\textrm{1\_sets}(\delta, p, n) > (\epsilon_1m + 1) + \delta n\right) \leq\\
(1 - p).
\end{gather*}

Case 2: The graph is $\epsilon_2$-far from connectivity, i.\,e. it consists of at least $\epsilon_2m + 2$ connected components.

\begin{gather*}
P\left(e \leq \frac{(\epsilon_1 + \epsilon_2)d}{4}\right) = P\left(e \leq \frac{\epsilon_1 + \epsilon_2}{2}m\right) \leq\\
P\left(\textrm{1\_sets}(\delta, p, n) \leq \frac{\epsilon_1 + \epsilon_2}{2}m + 1\right) = P\left(\textrm{1\_sets}(\delta, p, n) \leq (\epsilon_2m + 1) - \delta n\right) \leq\\
(1 - p).
\end{gather*}

Case 3: The graph is neither $\epsilon_1$-close to nor $\epsilon_2$-far from connectivity. We don't care.
\end{proof}

In a similar way, we can obtain tolerant testers for $2$-edge-connectivity, $3$-edge-connectivity and eulerianity.

\begin{theorem}
There are tolerant testers for connectivity, $2$-edge-connectivity and eulerianity with complexity $O\left(\frac{1}{(1 - p)(\epsilon_2 - \epsilon_1)^2 d^2}\log_e(\frac{1}{(\epsilon_2 - \epsilon_1) d})\right)$.
There is a tolerant tester for $3$-edge-connectivity with complexity $O\left(\frac{1}{(1 - p)(\epsilon_2 - \epsilon_1)^3 d^3}\log_e(\frac{1}{(\epsilon_2 - \epsilon_1) d})\right)$.
\end{theorem}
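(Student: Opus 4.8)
The plan is to mimic, once for each property, the construction of Figure~\ref{Tolerantalgorithm} together with the three-case argument of Lemma~\ref{lem:tolerantcorrectness}, substituting the appropriate distance estimator and then reading off the query complexity from the corresponding complexity lemma. For connectivity the statement is essentially already in hand: Lemma~\ref{lem:tolerantcorrectness} gives correctness of \texttt{zshg\_tolerant}, and its entire cost is the single call to \texttt{1\_sets}$(\delta, p, n)$ with $\delta = \frac{\epsilon_2 - \epsilon_1}{4}d$, which by Lemma~\ref{lem:componentcomplexity} runs in $O\!\left(\frac{1}{(1-p)\delta^2}\log_e(\frac{1}{\delta})\right)$; substituting $\delta = \frac{(\epsilon_2-\epsilon_1)d}{4}$ yields exactly the claimed bound.

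For $2$- and $3$-edge-connectivity I would wrap the corresponding distance estimators (the algorithm of Figure~\ref{12leaves} for $k=2$, and the estimator built from Figure~\ref{123leaf} for $k=3$) in the same way: set the precision parameter $\delta$ to a fixed constant fraction of $(\epsilon_2-\epsilon_1)d$, subtract from the returned estimate the deterministic slack coming from the exact distance formula (the $c_1+\lceil c_2/2\rceil$ of Lemma~\ref{Lemma:Eswaran1976} for $k=2$, the $\lceil\Phi_k(V)/2\rceil$ of Lemma~\ref{Lemma:Naor1997} for $k=3$), and compare the result against the midpoint threshold $\frac{\epsilon_1+\epsilon_2}{2}m$ with $m = dn/2$. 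Then the argument of Lemma~\ref{lem:tolerantcorrectness} applies verbatim: if the graph is $\epsilon_1$-close, its true distance is at most $\epsilon_1 m$, so with probability at least $p$ the estimate stays below $\epsilon_1 m + (\text{additive error}) \leq \frac{\epsilon_1+\epsilon_2}{2}m$ and the tester accepts; if the graph is $\epsilon_2$-far, the symmetric inequality makes it reject; if it is neither, we do not care. The cost is again one estimator call, which by Lemma~\ref{lem:12leavescomplexity} (resp. the $k=3$ analogue sketched after Figure~\ref{123leaf}) is $O\!\left(\frac{1}{(1-p)\delta^2}\log_e\frac{1}{\delta}\right)$ (resp. $O\!\left(\frac{1}{(1-p)\delta^3}\log_e\frac{1}{\delta}\right)$); substituting $\delta$ proportional to $(\epsilon_2-\epsilon_1)d$ gives the two stated complexities.

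For eulerianity I would not route through the connectivity argument but instead plug the direct distance-to-eulerianity estimator of Figure~\ref{eulerianitydistancealgorithm} (Lemma~\ref{lem:eulerianitydistancecorrectness}) into the same wrapper, with its $\epsilon$ parameter taken to be a constant fraction of $\epsilon_2-\epsilon_1$. Since that estimator already reports the distance to eulerianity relative to $|E|$ with additive error $\epsilon$, and runs, by Lemma~\ref{lem:eulerianitydistancecomplexity}, in $O\!\left(\frac{1}{(1-p)\delta^2}\log_e\frac{1}{\delta}\right)$ with $\delta=\epsilon/d$, the same case analysis and the same substitution go through unchanged.

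The main obstacle I anticipate is purely bookkeeping: the exact distance formulas for $k=2,3$ carry $O(1)$ additive and rounding terms (the ceilings, and the side conditions $|V|>2$, $c_1+c_2>1$ in Lemma~\ref{Lemma:Eswaran1976}), so the estimators of Figures~\ref{12leaves} and~\ref{123leaf} have additive error of the form $\frac12 + \delta n$ rather than just $\delta n$. To absorb this constant slack into the window $\frac{\epsilon_2-\epsilon_1}{2}m$ one must either bound $m$ from below --- equivalently split off the finitely many tiny graphs and decide them exactly, exactly as \texttt{zshg\_k} already does for small $n$ --- or shrink the constant in front of $(\epsilon_2-\epsilon_1)d$ when choosing $\delta$. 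Either way this is a routine but slightly fiddly adjustment, and it is the only place the proof deviates from a mechanical copy of Lemma~\ref{lem:tolerantcorrectness}.
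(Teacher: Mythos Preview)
Your proposal is correct and follows the paper's approach: the paper's own proof simply cites Lemmata~\ref{lem:componentcorrectness}, \ref{lem:componentcomplexity}, \ref{lem:tolerantcorrectness} for connectivity and then says ``the proofs for the others are similar.'' You have in fact spelled out the ``similar'' cases in more detail than the paper does, including the bookkeeping issue with the additive $\tfrac12$ slack, which the paper does not even mention.
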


\begin{proof}
For connectivity this follows from Lemmata \ref{lem:componentcorrectness},\ref{lem:componentcomplexity},\ref{lem:tolerantcorrectness}. The proofs for the others are similar.
\end{proof}

\section{Remarks}\label{Remark}

While the results on correctness also hold for multigraphs, the results on query complexity don't.

All of our algorithms can be parallelized easily and we did so in the implementations. The parallel versions offer an advantage when multiple pending queries can be answered more efficiently, e.g. in the case of large (too big to fit into RAM) graphs stored on an SSD (current SSDs typically achieve maximum throughput for random reads at about 16 simultaneous pending reads) or in the case of the queries being processed by a remote server on a network.

\bibliographystyle{plain}
\bibliography{zshg}

\end{document}